\theoremstyle{plain}
\newtheorem{proposition}{Proposition}
\newtheorem{lemma}{Lemma}
\newtheorem{corollary}{Corollary}
\theoremstyle{definition}
\newtheorem{definition}{Definition}
\theoremstyle{remark}
\newtheorem{remark}{Remark}
\newcommand{\renyi}{R\'enyi\xspace}
\newcommand{\supp}{\operatorname{supp}}
\newcommand{\E}{\operatorname{E}}
\newcommand{\D}{\mathcal{D}}
\newcommand{\R}{\mathcal{R}}
\newcommand{\eps}{\ensuremath{\epsilon}}
\newcommand{\alphaeps}{\ensuremath{(\alpha, \eps)}}
\newcommand{\epsdelta}{\ensuremath{(\eps,\delta)}\xspace}
\newcommand{\dx}{\,\mathrm{d}x}
\newcommand{\dy}{\,\mathrm{d}y}
\newcommand{\LaplaceM}{\operatorname{\textbf{L}}}
\newcommand{\GaussianM}{\operatorname{\textbf{G}}}
\newcommand{\RRM}{\operatorname{\textnormal{\textbf{RR}}}}
\newcommand{\leader}[1]{\medskip\noindent\textsc{#1}}
\begin{document}

\author{\IEEEauthorblockN{Ilya Mironov}
\IEEEauthorblockA{Google Brain}}

\title{\renyi\ Differential Privacy}

\maketitle

\begin{abstract}
We propose a natural relaxation of differential privacy based on the \renyi divergence. Closely related notions have appeared in several recent papers that analyzed composition of differentially private mechanisms. We argue that the useful analytical tool can be used as a privacy definition, compactly and accurately representing guarantees on the tails of the privacy loss. 

We demonstrate that the new definition shares many important properties with the standard definition of differential privacy, while additionally allowing tighter analysis of composite heterogeneous mechanisms.
\end{abstract}

\section{Introduction}

Differential privacy, introduced by Dwork et al.~\cite{DMNS06}, has been embraced by multiple research communities as a commonly accepted notion of privacy for algorithms on statistical databases. As applications of differential privacy begin to emerge, practical concerns of \emph{tracking} and \emph{communicating} privacy guarantees are coming to the fore.

Informally, differential privacy bounds a shift in the output distribution of a randomized algorithm that can be induced by a small change in its input. The standard definition of \eps-differential privacy puts a multiplicative upper bound on the worst-case change in the distribution's density.

Several relaxations of differential privacy explored other measures of closeness between two distributions. The most common such relaxation, the \epsdelta definition, has been a method of choice for expressing privacy guarantees of a variety of differentially private algorithms, especially those that rely on the Gaussian additive noise mechanism or whose analysis follows from composition theorems. The additive $\delta$ parameter allows suppressing the long tails of the mechanism's distribution where pure \eps-differential privacy guarantees may not hold.

Compared to the standard definition, \epsdelta-differential privacy offers asymptotically smaller cumulative loss under composition and allows greater flexibility in the selection of privacy-preserving mechanisms.

Despite its notable advantages and numerous applications, the definition of \epsdelta-differential privacy is an imperfect fit for its two most common use cases: the Gaussian mechanism and a composition rule. We briefly sketch them here and elaborate on these points in the next section.

The first application of \epsdelta-differential privacy was the analysis of the Gaussian noise mechanism~\cite{ODO}. In contrast with the Laplace mechanism, whose privacy guarantee is characterized tightly and accurately by \eps-differential privacy, a single Gaussian mechanism satisfies a \emph{curve} of  $(\eps(\delta), \delta)$-differential privacy definitions. Picking any one point on this curve leaves out important information about the mechanism's actual behavior.

The second common use of \epsdelta-differential privacy is due to applications of advanced composition theorems. The central feature of 
 \eps-differential privacy is that it is closed under composition; moreover, the \eps\ parameters of composed mechanisms simply add up, which motivates the concept of a \emph{privacy budget}.  By relaxing the guarantee to \epsdelta-differential privacy, advanced composition allows tighter analyses for compositions of (pure) differentially private mechanisms. Iterating this process, however, quickly leads to a combinatorial explosion of parameters, as each application of an advanced composition theorem leads to a wide selection of possibilities for $(\eps(\delta), \delta)$-differentially private guarantees.
 
In part to address the shortcomings of \epsdelta-differential privacy,  several recent works, surveyed in the next section, explored the use of  higher-order \emph{moments} as a way of bounding the tails of the privacy loss variable.

Inspired by these theoretical results and their applications, we propose \emph{\renyi differential privacy} as a natural relaxation of differential privacy that is well-suited for expressing guarantees of privacy-preserving algorithms and for composition of heterogeneous mechanisms. Compared to \epsdelta-differential privacy, \renyi differential privacy is a strictly stronger privacy definition. It offers an operationally convenient and quantitatively accurate way of tracking cumulative privacy loss throughout execution of a standalone differentially private mechanism and across many such mechanisms. Most significantly, \renyi differential privacy allows combining the intuitive and appealing concept of a privacy budget with application of advanced composition theorems.

The paper presents a self-contained exposition of the new definition, unifying current literature and demonstrating its applications. The organization of the paper is as follows. Section~\ref{s:dp} reviews the standard definition of differential privacy, its \epsdelta relaxation and its most common uses. Section~\ref{s:renyi} introduces the definition of \renyi differential privacy and proves its basic properties that parallel those of \eps-differential privacy, summarizing the results in Table~\ref{tab:properties}. Section~\ref{s:rdp_epsdelta} demonstrates a reduction from \renyi differential privacy to \epsdelta-differential privacy, followed by a proof of an advanced composition theorem in Section~\ref{s:advanced}. Section~\ref{s:basic} applies \renyi differential privacy to analysis of several basic mechanisms: randomized response for predicates, Laplace and Gaussian (see Table~\ref{tab:mechanisms} for a brief summary). Section~\ref{s:discussion} discusses assessment of risk due to application of a \renyi differentially private mechanism and use of \renyi differential privacy as a privacy loss tracking tool. Section~\ref{s:conclusions} concludes with open questions.

\section{Differential Privacy and Its Flavors}\label{s:dp}

\leader{\eps-Differential privacy~\cite{DMNS06}.} We first recall the standard definition of \eps-differential privacy.
\begin{definition}[\eps-DP]
	A randomized mechanism $f\colon \D\mapsto\R$ satisfies \eps-differential privacy (\eps-DP) if for any adjacent $D,D'\in\D$ and $S\subset \R$
	\[\Pr[f(D)\in S]\leq e^\eps\Pr[f(D')\in S].\]
\end{definition}
The above definition is contingent on the notion of \emph{adjacent} inputs $D$ and $D'$, which is domain-specific and is typically chosen to capture the contribution to the mechanism's input by a single individual.

The \emph{Laplace} mechanism is a prototypical \eps-differentially private algorithm, allowing release of an approximate (noisy) answer to an arbitrary query with values in~$\mathbb{R}^n$. The mechanism is defined as
\[
\LaplaceM_\eps f(x) \triangleq f(x) + \Lambda(0,\Delta_1 f/\eps),
\]
where $\Lambda$ is the Laplace distribution and $\ell_1$-sensitivity of the query $f$ is
\[
\Delta_1 f \triangleq \max_{D,D'} \|f(D)-f(D')\|_1
\]
taken over all adjacent inputs $D$ and $D'$.

The basic composition theorem states that if $f$ and $g$ are, respectively, $\eps_1$- and $\eps_2$-DP, then the simultaneous release of $f(D)$ and $g(D)$ satisfies $(\eps_1+\eps_2)$-DP. Moreover, the mechanism $g$ may be selected adaptively, after seeing the output of $f(D)$.

\leader{\epsdelta-Differential privacy~\cite{ODO}.} A relaxation of \eps-differential privacy allows a $\delta$ additive term in its defining inequality:
\begin{definition}[\epsdelta-DP]
	A randomized mechanism $f\colon \D\mapsto\R$ offers \epsdelta-differential privacy if for any adjacent $D,D'\in\D$ and $S\subset \R$
	\[\Pr[f(D)\in S]\leq e^\eps\Pr[f(D')\in S]+\delta.\]
\end{definition}

The common interpretation of \epsdelta-DP is that it is \eps-DP ``except with probability $\delta$''. Formalizing this statement runs into difficulties similar to the ones addressed by Mironov et al.~\cite{MPRV09-CDP} for a different (computational) relaxation. For any two adjacent inputs, $D_1$ and $D_2$, it is indeed possible to define an $\eps$-DP mechanism that agrees with $f$ with all but $\delta$ probability. Extending this argument to domains of exponential sizes (for instance, to a boolean hypercube) cannot be done without diluting the guarantee exponentially~\cite{De12}. We conclude that \epsdelta-differential privacy is a \emph{qualitatively} different definition than pure \eps-DP (unless, of course, $\delta=0$, which we assume not to be the case through the rest of this section).

Even for the simple case of exactly two input databases (such as when the adversary knows the entire dataset except whether it contains a particular record), the $\delta$ additive term encompasses two very different modes in which privacy may fail. In both scenarios \eps-DP holds with probability $1-\delta$, they differ in what happens with the remaining probability~$\delta$. In the first scenario privacy degrades gracefully, such as to $\eps_1$-DP with probability $\delta/2$, to $\eps_2$-DP with probability $\delta/4$, etc. In the other scenario, with probability $\delta$ the secret---whether the record is part of the database or not---becomes completely exposed. The difference between the two failure modes can be quite stark. In the former, there is always some residual deniability; in the latter, the adversary occasionally learns the secret with certainty. Depending on the adversary's tolerance to false positives, plausible deniability may offer adequate protection, but a single \epsdelta-DP privacy statement cannot differentiate between the two alternatives. For a lively parable  of the different guarantees offered by the \eps-DP and \epsdelta-DP definitions see McSherry~\cite{McSherry17-lottery}.

To avoid the worst-case scenario of always violating privacy of a $\delta$ fraction of the dataset, the standard recommendation is to choose $\delta\ll 1/N$ or even $\delta=\mathrm{negl}(1/N)$, where $N$ is the number of contributors. This strategy forecloses possibility of one particularly devastating outcome, but other forms of information leakage remain.

The definition of \epsdelta-differential privacy was initially proposed to capture privacy guarantees of the Gaussian mechanism, defined as follows:
\[
\GaussianM_\sigma f(x) \triangleq f(x) + N(0, \sigma^2).
\]
Elementary analysis shows that the Gaussian mechanism cannot meet \eps-DP for any $\eps$. Instead, it satisfies a continuum of incomparable \epsdelta-DP guarantees, for all combinations of $\eps<1$ and $\sigma>\sqrt{2 \ln{1.25/\delta}}\Delta_2 f/\eps$, where $f$'s $\ell_2$-sensitivity is defined as
\[
\Delta_2 f \triangleq \max_{D,D'} \|f(D)-f(D')\|_2
\]
taken over all adjacent inputs $D$ and $D'$.

There exist valid reasons for preferring the Gaussian mechanism over Laplace: the noise comes from the same Gaussian distribution (closed under addition) as the error that may already be present in the dataset; the standard deviation of the noise is proportional to the  query's $\ell_2$ sensitivity, which is no larger and often much smaller than~$\ell_1$; for the same standard deviation, the tails of the Gaussian (normal) distribution decay much faster than those of the Laplace (exponential) distribution. Unfortunately, distilling the guarantees of the Gaussian mechanism down to a single number or a small set of numbers using the language of \epsdelta-DP always leaves a possibility of a complete privacy compromise that the mechanism itself may not allow.

Another common reason for bringing in \epsdelta-differential privacy is application of advanced composition theorems. Consider the case of $k$-fold adaptive composition of an \epsdelta-DP mechanism. For any $\delta'>0$ it holds that the composite mechanism is $(\eps',k\delta+\delta')$-DP, where $\eps'\triangleq\sqrt{2k\ln(1/\delta')}\eps+k\eps(e^\eps-1)$~\cite{DRV10-boosting}. Note that, similarly to our discussion of the Gaussian mechanism, a single mechanism satisfies a continuum of incomparable  \epsdelta-DP guarantees.

Kairouz et al. give a procedure for computing an \emph{optimal} $k$-fold composition of an \epsdelta-DP mechanism~\cite{KOV15-composition}. Murtagh and Vadhan~\cite{MV16-optimal} demonstrate that generalizing this result to composition of heterogeneous mechanisms (i.e., satisfying $(\eps_i,\delta_i)$-DP for different $\eps_i$'s) is \#P-hard; they describe a PTAS for an approximate solution. None of these works tackles the problem of composing mechanisms that satisfy several \epsdelta-DP guarantees simultaneously.

\leader{(zero)-Concentrated Differential Privacy and the moments accountant.} The closely related work by Dwork and Rothblum~\cite{DR16-CDP}, followed by Bun and Steinke~\cite{BS16-zCDP}, explore privacy definitions---Concentrated Differential Privacy and zero-Concentrated Differential Privacy---that are framed using the language of, respectively, subgaussian tails and the \renyi divergence. The main difference between our approaches is that both Concentrated and zero-Concentrated DP require a linear bound on \emph{all} positive moments of a privacy loss variable. In contrast, our definition applies to one moment at a time. Although less restrictive, it allows for more accurate numerical analyses.

The work by Abadi et al.~\cite{Abadi16-DP-DL} on differentially private stochastic gradient descent introduced the \emph{moments accountant} as an internal tool for tracking privacy loss across multiple invocations of the Gaussian mechanism applied to random subsets of the input dataset.  The paper's results are expressed via a necessarily lossy translation of the accountant's output (bounds on select moments of the privacy loss variable) to the language of \epsdelta-differential privacy. 

Taken together, the works on Concentrated DP, zero-Concentrated DP, and the moments accountant point towards adopting \renyi differential privacy as an effective and flexible mechanism for capturing privacy guarantees of a wide variety of algorithms and their combinations.

\leader{Other relaxations.} We briefly mention other relaxations and generalizations of differential privacy.

Under the indistinguishability-based Computational Differential Privacy (IND-CDP) definition~\cite{MPRV09-CDP}, the test of closeness between distributions on adjacent inputs is computationally bounded (all other definitions considered in this paper hold against an unbounded, information-theoretic adversary). The IND-CDP notion allows much more accurate functionalities in the two-party setting~\cite{MMPRTV10}; in the traditional client-server setup there is a natural class of functionalities where the gap between IND-CDP and \epsdelta-DP is minimal~\cite{GKY11-limits}, and there are (contrived) examples where the computational relaxation permits tasks that are infeasible under information-theoretic definitions~\cite{BCV16-separating}.

Several other works, most notably the Pufferfish and the coupled-worlds frameworks~\cite{KM14-pufferfish,BGKS13-coupled}, propose different stability constraints on the output distribution of privacy-preserving mechanisms. Although they differ in \emph{what} distributions are compared, their notion of closeness is the same as in~\epsdelta-DP.

\section{\renyi differential privacy}\label{s:renyi}

We describe a generalization of the notion of differential privacy based on the concept of the \renyi\ divergence. Connection between the two notions has been pointed out before (mostly for one extreme order, known as the Kullback-Leibler divergence~\cite{DRV10-boosting,DJW13-local}); our contribution is in systematically exploring the relationship and its practical implications. 

The (parameterized) \renyi\ divergence is classically defined as follows~\cite{Renyi61}:
\begin{definition}[\renyi divergence]
For two probability distributions $P$ and $Q$ defined over $\R$, the \renyi\ divergence of order $\alpha>1$ is
\begin{align*}
D_\alpha(P\|Q)\triangleq\frac{1}{\alpha-1}\log \E_{x\sim Q}\left( \frac{P(x)}{Q(x)}\right)^\alpha.
\end{align*}
\end{definition}
(All logarithms are natural; $P(x)$ is the density of $P$ at $x$.)

For the endpoints of the interval $(1,\infty)$ the \renyi\ divergence is defined by continuity. Concretely, $D_1(P\|Q)$ is set to be $\lim_{\alpha\to1}D_\alpha(P\|Q)$ and can be verified to be equal to the Kullback-Leibler divergence (also known as relative entropy):
\[
D_1(P\|Q)=\E_{x\sim P}\log\frac{P(x)}{Q(x)}.
\]
Note that the expectation is taken over $P$, rather than over $Q$ as in the previous definition. It is possible, though, that $D_1(P\|Q)$ thus defined is finite whereas $D_\alpha(P\|Q)=+\infty$ for all $\alpha>1$.

Likewise,
\[D_\infty(P\|Q)=\sup_{x\in \supp{Q}}\log \frac{P(x)}{Q(x)}.\]

For completeness, we reproduce in the Appendix properties of the \renyi divergence important to the sequel: non-negativity, monotonicity, probability preservation, and a weak triangle inequality (Propositions~\ref{prop:non-negativity}--\ref{prop:triangle}).

The relationship between the \renyi\ divergence with $\alpha=\infty$ and differential privacy is immediate. A randomized mechanism $f$ is \eps-differentially private if and only if its distribution over any two adjacent inputs $D$ and $D'$ satisfies 
\[D_\infty\left(f(D)\|f(D')\right)\leq \eps.\]

It motivates exploring a relaxation of differential privacy based on the \renyi\ divergence. 
\begin{definition}[$(\alpha, \eps)$-RDP]
A randomized mechanism $f\colon \D\mapsto\R$ is said to have \eps-\renyi\ differential privacy of order $\alpha$, or $(\alpha, \eps)$-RDP for short, if for any adjacent $D,D'\in\D$ it holds that 
\[D_\alpha\left(f(D)\|f(D')\right)\leq \eps.\]
\end{definition}

\begin{remark}
Similarly to the definition of differential privacy, a finite value for \eps-RDP implies that feasible outcomes of $f(D)$ for \emph{some} $D\in\D$ are feasible, i.e., have a non-zero density, for all inputs from $\D$ except for a set of measure 0. Assuming that this is the case, we let the event space be the support of the distribution.
\end{remark}

%\begin{remark}
%Throughout this paper we assume the mechanism's output to be discrete. Translation to the continuous case can be done by replacing probability masses with probability densities and summation with integration (see Section~\ref{s:basic} for examples).
%\end{remark}

\begin{remark}
The \renyi\ divergence can be defined for $\alpha$ smaller than 1, including negative orders. We are not using these orders in our definition of \renyi\ differential privacy.
\end{remark}

The standard definition of differential privacy has been successful as a privacy measure because it simultaneously meets several important criteria. We verify that the relaxed definition inherits many of the same properties. The results of this section are summarized in Table~\ref{tab:properties}.

\leader{``Bad outcomes" guarantee.} A privacy definition is only as useful as its guarantee for data contributors. The simplest such assurance is the ``bad outcomes'' interpretation. Consider a person, concerned about some adverse consequences, deliberating whether to withhold her record from the database. Let us say that some outputs of the mechanism are labeled as ``bad.'' The differential privacy guarantee asserts that the probability of observing a bad outcome will not change (either way) by more than a factor of $e^\eps$ whether anyone's record is part of the input or not (for appropriately defined ``adjacent'' inputs). This is an immediate consequence of the definition of differential privacy, where the subset $S$ is the union of bad outcomes.

This guarantee is relaxed for \renyi\ differential privacy. Concretely, if $f$ is $(\alpha,\eps)$-RDP, then by Proposition~\ref{prop:preservation}:
\begin{align*}
e^{-\eps}\Pr[f(D')\in S]^{\alpha/(\alpha-1)}&\leq\Pr[f(D)\in S]\\
&\leq\left(e^{\eps}\Pr[f(D')\in S]\right)^{(\alpha-1)/\alpha}.
\end{align*}

We discuss consequences of this relaxation in Section~\ref{s:discussion}. 

\leader{Robustness to auxiliary information.} Critical to the adoption of differential privacy as an operationally useful definition is its lack of assumptions on the adversary's knowledge. More formally, the property is captured by the Bayesian interpretation of privacy guarantees, which compares the adversary's prior with the posterior.

Assume that the adversary has a prior $p(D)$ over the set of possible inputs $D\in\D$, and observes an output $X$ of an \eps-differentially private mechanism $f$. Its posterior satisfies the following guarantee for all pairs of adjacent inputs $D,D'\in\D$ and all $X\in \R$:
\[
\frac{p(D\hphantom{'}|X)}{p(D'|X)}\leq e^{\eps}\frac{p(D)}{p(D')}.
\]
In other words, evidence obtained from an \eps-differentially private mechanism does not move the relative probabilities assigned to adjacent inputs (the odds ratio) by more than~$e^{\eps}$.

The guarantee implied by RDP is a probabilistic statement about the change in the Bayes factor. Let the random variable $R(D,D')$ be defined as follows:
\begin{multline*}
R(D,D')\sim \frac{p(D'|X)}{p(D\hphantom{'}|X)}= \frac{p(X|D')\cdot p(D')}{p(X|D\hphantom{'})\cdot p(D\hphantom{'})},\\
\;\textrm{where }X\sim f(D).
\end{multline*}

It follows immediately from definition that the \renyi\ divergence of order $\alpha$ between $P=f(D')$ and $Q=f(D)$ bounds the $\alpha$-th moment of the \emph{change} in $R$:
\begin{multline*}
\E_Q\left[\left\{\frac{R_\textrm{post}(D,D')}{R_\textrm{prior}(D,D')}\right\}^\alpha\right]=\E_Q\left[P(x)^\alpha Q(x)^{-\alpha}\right]=\\
\exp[(\alpha-1)D_\alpha(f(D')\|f(D))].
\end{multline*} 

By taking the logarithm of both sides and applying Jensen's inequality we obtain that
\begin{multline}\label{eq:posterior-bound}
\E_{f(D)}\left[\log R_\textrm{post}(D,D')-\log R_\textrm{prior}(D,D')\right]\leq\\
 D_\alpha(f(D)\|f(D')).
\end{multline}
(This can also be derived by observing that
\begin{multline*}
\E_{f(D)}\left[\log R_\textrm{post}(D,D')-\log R_\textrm{prior}(D,D')\right]=\\
D_1(f(D)\|f(D'))
\end{multline*}
 and by monotonicity of the \renyi\ divergence.)

Compare (\ref{eq:posterior-bound}) with the guarantee of pure differential privacy, which states that $\log R_\textrm{post}(D,D')- \log R_\textrm{prior}(D,D')\leq \eps$ everywhere, not just in expectation.

\leader{Post-processing.} A privacy guarantee that can be diminished by manipulating output is unlikely to be useful. Consider a randomized mapping $g\colon \R\mapsto\R'$. We observe that $D_\alpha(P\|Q)\geq D_\alpha(g(P)\|g(Q))$ by the analogue of the data processing inequality~\cite[Theorem 9]{EH07-Renyi}. It means that if $f(\cdot)$ is \alphaeps-RDP, so is $g(f(\cdot))$. In other words, \renyi\ differential privacy is preserved by  post-processing.

\leader{Preservation under adaptive sequential composition.} The property that makes possible modular construction of differentially private algorithms is self-composition: if $f(\cdot)$ is $\eps_1$-differentially private and $g(\cdot)$ is $\eps_2$-differentially private, then simultaneous release of $f(D)$ and $g(D)$ is $\eps_1+\eps_2$-differentially private. The guarantee even extends to when $g$ is chosen \emph{adaptively} based on $f$'s output: if $g$ is indexed by elements of $\R$ and $g_X(\cdot)$ is $\eps_2$-differentially private for any $X\in \R$, then publishing $(X, Y)$, where $X\sim f(D)$ and $Y\sim g_X(D)$, is $\eps_1+\eps_2$-differentially private.

We prove a similar statement for the composition of two RDP mechanisms.

\begin{proposition}\label{prop:self-composition}Let $f\colon \D\mapsto\R_1$ be $(\alpha,\eps_1)$-RDP and $g\colon \R_1\times\D\mapsto\R_2$ be $(\alpha,\eps_2)$-RDP, then the mechanism defined as $(X,Y)$, where $X\sim f(D)$ and $Y\sim g(X,D)$, satisfies $(\alpha,\eps_1+\eps_2)$-RDP.
\end{proposition}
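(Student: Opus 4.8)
The plan is to unfold the definition of $(\alpha,\eps)$-RDP and establish directly that $D_\alpha(P\|Q)\le\eps_1+\eps_2$, where $P$ and $Q$ denote the joint distributions of $(X,Y)$ on inputs $D$ and $D'$ respectively, for an arbitrary adjacent pair $D,D'\in\D$. The crucial structural observation is that the joint densities factor as $P(x,y)=f(D)(x)\cdot g(x,D)(y)$ and $Q(x,y)=f(D')(x)\cdot g(x,D')(y)$, since $Y$ is drawn from $g$ conditioned on the realized value of $X$.

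It is cleanest to work with the exponentiated divergence $\exp[(\alpha-1)D_\alpha(P\|Q)]=\E_{(x,y)\samplefrom Q}(P(x,y)/Q(x,y))^\alpha$, because this quantity multiplies under the factorization of the densities. Substituting the factored forms and separating the variables, I would rewrite the double integral as an outer integral over $x$ whose integrand carries the inner integral $\int g(x,D')(y)^{1-\alpha}g(x,D)(y)^\alpha\dy$ as a factor.

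The key step---and the place where the adaptive dependence is handled---is to recognize that this inner integral equals exactly $\exp[(\alpha-1)D_\alpha(g(x,D)\|g(x,D'))]$ for each fixed $x$. Since $g(x,\cdot)$ is $(\alpha,\eps_2)$-RDP for \emph{every} $x\in\R_1$, this is bounded by $\exp[(\alpha-1)\eps_2]$ uniformly in $x$. Pulling this constant out of the outer integral leaves $\exp[(\alpha-1)\eps_2]\cdot\int f(D')(x)^{1-\alpha}f(D)(x)^\alpha\dx=\exp[(\alpha-1)\eps_2]\cdot\exp[(\alpha-1)D_\alpha(f(D)\|f(D'))]$, and the $(\alpha,\eps_1)$-RDP hypothesis on $f$ bounds the remaining factor by $\exp[(\alpha-1)\eps_1]$. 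Multiplying, taking logarithms, and dividing by $\alpha-1$ yields $D_\alpha(P\|Q)\le\eps_1+\eps_2$, as desired.

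The main obstacle I anticipate is the careful treatment of adaptivity: the RDP bound for $g$ must be applied pointwise in $x$ and must hold uniformly, which is precisely what the hypothesis guarantees for each slice $g(x,\cdot)$; keeping the order of integration and the direction of the divergence arguments straight is the part most prone to slips. A secondary technical point is that the algebra above is written for $\alpha\in(1,\infty)$, so the endpoint orders $\alpha=1$ (where the chain rule for the Kullback--Leibler divergence gives the same additive bound) and $\alpha=\infty$ (which collapses to ordinary composition of $\eps$-DP) should be recovered either by continuity or by a short separate argument.
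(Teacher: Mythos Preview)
Your proposal is correct and follows essentially the same approach as the paper: factor the joint densities, separate the double integral, bound the inner integral over $y$ pointwise in $x$ using the $(\alpha,\eps_2)$-RDP assumption on $g(x,\cdot)$, then bound the remaining integral over $x$ using the $(\alpha,\eps_1)$-RDP assumption on $f$. The paper does not explicitly treat the endpoint cases $\alpha=1$ and $\alpha=\infty$, so your remark about handling those by continuity or a separate short argument is a small addition rather than a departure.
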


\begin{proof}
Let $h\colon \D\mapsto\R_1\times\R_2$ be the result of running $f$ and $g$ sequentially. We write $X$, $Y$, and $Z$ for the distributions $f(D)$, $g(X, D)$, and the joint distribution $(X,Y)=h(D)$. $X'$, $Y'$, and $Z'$ are similarly defined if the input is $D'$.  Then
\begin{align*}
\exp&\left[(\alpha-1)D_\alpha(h(D)\|h(D'))\right]\\
&=\int_{\R_1\times\R_2}Z(x,y)^\alpha Z'(x,y)^{1-\alpha}\dx\dy\\
&=\int_{\R_1}\int_{\R_2}(X(x)Y(x,y))^\alpha(X'(x)Y'(x,y))^{1-\alpha}\dy\dx\\
&=\int_{\R_1}\!\!\!\!X(x)^\alpha X'(x)^{1-\alpha}\left\{\int_{\R_2}\!\!\!\!Y(x,y)^\alpha Y'(x,y)^{1-\alpha}\dy\right\}\!\dx\\
&\leq \int_{\R_1}X(x)^\alpha X'(x)^{1-\alpha}\dx\cdot \exp((\alpha-1)\eps_2)\\
&\leq \exp((\alpha-1)\eps_1)\exp((\alpha-1)\eps_2)\\
&=\exp((\alpha-1)(\eps_1+\eps_2)),
\end{align*}
from which the claim follows.
\end{proof}

Significantly, the guarantee holds whether the releases of $f$ and $g$ are coordinated or not, or computed over the same or different versions of the input dataset. It allows us to operate with a well-defined notion of a \emph{privacy budget} associated with an individual, which is a finite resource consumed with each differentially private data release.

Extending the concept of the privacy budget, we say that the \renyi\ differential privacy has a \emph{budget curve} parameterized by the order $\alpha$. We present examples illustrating this viewpoint in Section~\ref{s:basic}.

\leader{Group privacy.} Although the definition of differential privacy constrains a mechanism's outputs on pairs of \emph{adjacent} inputs, its guarantee extends, in a progressively weaker form, to inputs that are farther apart. This property has two important consequences. First, the differential privacy guarantee degrades gracefully if our assumptions about one person's influence on the input are (somewhat) wrong. For example, a single family contributing to a survey will likely share many socio-economic, demographic, and health characteristics. Rather than collapsing, the differential privacy guarantee will scale down linearly with the number of family members. Second, the group privacy property allows preprocessing input into a differentially private mechanism, possibly amplifying (in a controlled fashion) one record's impact on the output of the computation.

We define group privacy using a notion of $c$-stable transformation~\cite{McSherry09-PINQ}. We say that $g\colon\D\mapsto\D'$ is $c$-stable if $g(A)$ and $g(B)$ are adjacent in $D'$ implies that there exists a sequence of length $c+1$ so that $D_0=A,\dots,D_c=B$ and all $(D_i,D_{i+1})$ are adjacent in $\D$.

The standard notion of differential privacy satisfies the following. If $f$ is \eps-differentially private and $g\colon \D'\mapsto\D$ is $c$-stable, then $f\circ g$ is $c\eps$-differentially private. A similar statement holds for \renyi\ differential privacy.

\begin{proposition}\label{prop:group}If $f\colon \D\mapsto\R$ is \alphaeps-RDP, $g\colon \D'\mapsto\D$ is $2^c$-stable and $\alpha\geq2^{c+1}$, then $f\circ g$ is $(\alpha/2^c,3^c\eps)$-RDP.
\end{proposition}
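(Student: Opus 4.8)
The plan is to reduce the statement to a chaining bound on \renyi\ divergences along a path of adjacent databases, and then prove that bound by induction on $c$ using the weak triangle inequality (Proposition~\ref{prop:triangle}).

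First I would unwind the group-privacy setup. Fix adjacent inputs $A,B$ to $f\circ g$. By $2^c$-stability of $g$ there is a chain $g(A)=D_0,D_1,\dots,D_{2^c}=g(B)$ in which every consecutive pair $(D_i,D_{i+1})$ is adjacent in $\D$. Since $f$ is \alphaeps-RDP, each link satisfies $D_\alpha(f(D_i)\|f(D_{i+1}))\le\eps$, and by monotonicity of the \renyi\ divergence in its order the same bound holds at every order $\beta\le\alpha$. Thus it suffices to prove the following chaining claim: for any chain $D_0,\dots,D_{2^c}$ whose links satisfy $D_\alpha(f(D_i)\|f(D_{i+1}))\le\eps$, if $\alpha\ge 2^{c+1}$ then $D_{\alpha/2^c}(f(D_0)\|f(D_{2^c}))\le 3^c\eps$. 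Applying this to the chain above yields exactly $(\alpha/2^c,3^c\eps)$-RDP for $f\circ g$.

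Next I would prove the chaining claim by induction on $c$. The base case $c=0$ is a single link and holds by hypothesis. For the inductive step, I split a $2^{c+1}$-chain at its midpoint into two $2^c$-chains $D_0,\dots,D_{2^c}$ and $D_{2^c},\dots,D_{2^{c+1}}$, and set $\beta\triangleq\alpha/2^{c+1}$. The weak triangle inequality bounds $D_\beta(f(D_0)\|f(D_{2^{c+1}}))$ by a weighted sum of $D_{2\beta}(f(D_0)\|f(D_{2^c}))$ and $D_{2\beta-1}(f(D_{2^c})\|f(D_{2^{c+1}}))$, whose orders are $\alpha/2^c$ and $\alpha/2^c-1$. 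For the first half I apply the induction hypothesis at order $\alpha$ (so that the halved order is $2\beta=\alpha/2^c$), and for the second half at order $\alpha-2^c\le\alpha$ (using monotonicity on the links), each of which returns a bound of $3^c\eps$. It then remains to check that the coefficients supplied by Proposition~\ref{prop:triangle} sum to at most $3$, turning $3^c\eps$ into $3^{c+1}\eps$.

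The bookkeeping is where the hypotheses earn their keep, and the main obstacle will be confirming that the constants line up. The induction forces the order to halve at each level while the privacy parameter triples, which is precisely the $(\alpha/2^c,3^c)$ scaling; the condition $\alpha\ge 2^{c+1}$ is exactly what guarantees that every intermediate order invoked stays below $\alpha$ (so monotonicity applies) and above $1$ (so the divergences are well defined), the extremes being $\alpha$ itself and the output order, which is at least $2$. Concretely, with the weak triangle inequality in the form $D_\beta(P\|R)\le\frac{\beta-1/2}{\beta-1}D_{2\beta}(P\|Q)+D_{2\beta-1}(Q\|R)$, the coefficient $\frac{\beta-1/2}{\beta-1}$ is at most $2$ precisely when $\beta\ge 3/2$, which holds since here $\beta\ge 2$; hence the two halves combine with total weight at most $3$, closing the induction.
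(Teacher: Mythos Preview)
Your proof is correct and follows essentially the same approach as the paper: both use the weak triangle inequality (the $p=q=2$ case of Proposition~\ref{prop:triangle}, i.e., Corollary~\ref{col:triangle}(1)) together with monotonicity, and argue by induction on~$c$. The paper writes out only the step $c=1$ and leaves the rest implicit; your version makes the full inductive structure explicit by formulating the chaining claim on $2^c$-chains and splitting at the midpoint, which is exactly the rigorous unfolding of the paper's ``the rest follows by induction.''
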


\begin{proof}We prove the statement for $c=1$, the rest follows by induction. 

Define $h=f\circ g$. Since $g$ is 2-stable, it means that for any adjacent $D,D'\in \D'$ there exist $A\in D$, so that $g(D)$ and $A$, $A$ and $g(D')$ are adjacent in $D$.

By Corollary~\ref{col:triangle} and monotonicity of the \renyi\ divergence, we have that $h=f\circ g$ satisfies
\begin{multline*}
D_{\alpha/2}(h(D)\|h(D'))\leq\frac{\alpha-1}{\alpha-2}D_{\alpha}(h(D)\|h(A))+\\
D_{\alpha-1}(h(A)\|h(D'))\leq 3\eps.
\end{multline*}
\end{proof}

\begin{table*}
\begin{center}
\begin{tabular}{p{.28\textwidth}p{.31\textwidth}p{.33\textwidth}}
\toprule
\textbf{Property} & \textbf{Differential Privacy} & \textbf{\renyi\ Differential Privacy}\\
\midrule
\multirow{2}{.26\textwidth}{Change in probability of outcome $S$} & $\Pr[f(D)\in S]\leq\ e^{\eps}\Pr[f(D')\in S]$ &$\Pr[f(D)\in S]\leq\left(e^{\eps}\Pr[f(D')\in S]\right)^{(\alpha-1)/\alpha}$\\
& $\Pr[f(D)\in S]\geq e^{-\eps}\Pr[f(D')\in S]$ &$\Pr[f(D)\in S]\geq e^{-\eps}\Pr[f(D')\in S]^{\alpha/(\alpha-1)}$\\
\midrule
Change in the Bayes factor& $\dfrac{R_\textrm{post}(D,D')}{R_\textrm{prior}(D,D')}\leq e^\eps$ always
& $\E\left[\left\{\dfrac{R_\textrm{post}(D,D')}{R_\textrm{prior}(D,D')}\right\}^\alpha\right]\leq \exp[(\alpha-1)\eps]$ \\
\midrule
Change in log of the Bayes factor & $|\Delta \log R(D,D')|\leq \eps$ always
& $\E[\Delta \log R(D,D')]\leq \eps$
\\
\midrule
Post-processing & \multicolumn{2}{c}{$f$ is \eps-DP (or $(\alpha, \eps)$-RDP) $\Rightarrow$ $g\circ f$ is \eps-DP (or $(\alpha, \eps)$-RDP, resp.)}\\ 
\midrule
Adaptive sequential composition (basic)& \multicolumn{2}{c}{$f,g$ are \eps-DP (or $(\alpha, \eps)$-RDP) $\Rightarrow$ $(f,g)$ is $2\eps$-DP (resp., $(\alpha, 2\eps)$-RDP)}\\
\midrule
Group privacy, pre-processing & \multicolumn{2}{c}{$f$ is \eps-DP (or $(\alpha, \eps)$-RDP), $g$ is $2^c$-stable $\Rightarrow$ $f\circ g$ is $2^c\eps$-DP (resp., $(\alpha/2^c,3^c\eps)$-RDP)}\\
\bottomrule
\end{tabular}
\end{center}
\caption{Summary of properties shared by differential privacy and RDP.}\label{tab:properties}
\end{table*}

% \section{Normal and Laplacian Distributions as Extreme Points}
\section{RDP and \epsdelta-DP}\label{s:rdp_epsdelta}

As we observed earlier, the definition of \eps-differential privacy coincides with $(\infty,\eps)$-RDP. By monotonicity of the \renyi\ divergence, $(\infty,\eps)$-RDP implies $(\alpha,\eps)$-RDP for all finite~$\alpha$.

In turn, an \alphaeps-RDP implies $(\eps_\delta,\delta)$-differential privacy for any given probability $\delta >0$.

\begin{proposition}[From RDP to \epsdelta-DP]\label{prop:rdp-to-epsdelta}
If $f$ is an \alphaeps-RDP mechanism, it also satisfies $(\eps+\frac{\log 1/\delta}{\alpha-1},\delta)$-differential privacy for any $0<\delta<1$.
\end{proposition}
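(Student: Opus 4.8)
The plan is to convert the moment bound implied by $(\alpha,\eps)$-RDP into a tail bound on the privacy loss and then split the test event accordingly. Fix adjacent $D,D'$, write $P=f(D)$ and $Q=f(D')$, and let $L(x)=P(x)/Q(x)$ be the likelihood ratio. The RDP hypothesis $D_\alpha(P\|Q)\le\eps$ is exactly the statement $\E_{x\sim Q}[L(x)^\alpha]\le e^{(\alpha-1)\eps}$. The identity I would lean on is $\E_{x\sim P}[L(x)^{\alpha-1}]=\int P(x)^\alpha Q(x)^{1-\alpha}\dx=\E_{x\sim Q}[L(x)^\alpha]$, which transfers the same bound onto an expectation taken under $P$.

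First I would set $\eps_\delta\triangleq\eps+\frac{\log(1/\delta)}{\alpha-1}$ and define the ``bad'' set $B=\{x:L(x)>e^{\eps_\delta}\}$ on which the privacy loss exceeds the target level. For an arbitrary measurable $S$, decompose $\Pr[P\in S]=\int_{S\setminus B}P(x)\dx+\int_{S\cap B}P(x)\dx$. On $S\setminus B$ the pointwise inequality $P(x)\le e^{\eps_\delta}Q(x)$ holds by definition of $B$, so the first integral is at most $e^{\eps_\delta}\int_{S}Q(x)\dx=e^{\eps_\delta}\Pr[Q\in S]$, which supplies the multiplicative term of the $\epsdelta$ guarantee. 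The second integral is bounded crudely by $\int_B P(x)\dx=\Pr[P\in B]$, so it remains to show $\Pr[P\in B]\le\delta$.

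For the tail bound I would apply Markov's inequality to the nonnegative random variable $L(x)^{\alpha-1}$ under $P$. Since $\alpha>1$, raising to the power $\alpha-1$ is monotone, so $\Pr_{x\sim P}[L(x)>e^{\eps_\delta}]=\Pr_{x\sim P}[L(x)^{\alpha-1}>e^{(\alpha-1)\eps_\delta}]\le e^{-(\alpha-1)\eps_\delta}\,\E_{x\sim P}[L(x)^{\alpha-1}]\le e^{-(\alpha-1)\eps_\delta}\,e^{(\alpha-1)\eps}$, using the identity above. Substituting $\eps_\delta$ makes the exponent collapse: $(\alpha-1)(\eps-\eps_\delta)=-\log(1/\delta)=\log\delta$, so the tail probability is exactly $\delta$. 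Combining the two pieces yields $\Pr[P\in S]\le e^{\eps_\delta}\Pr[Q\in S]+\delta$, which is precisely $(\eps_\delta,\delta)$-DP.

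I do not expect a serious obstacle here; the argument uses only Markov's inequality and the moment identity, not the triangle-type propositions. The step requiring the most care is the exponent bookkeeping---verifying that $(\alpha-1)\eps_\delta=(\alpha-1)\eps+\log(1/\delta)$ so that the tail is \emph{exactly} $\delta$ rather than a looser constant, and checking that the threshold $\delta<1$ is what guarantees $\eps_\delta>\eps$. A minor technical point is the treatment of $x$ with $Q(x)=0$, where $L(x)$ is infinite; by the Remark following the RDP definition such outcomes carry zero $P$-mass (a finite \renyi\ divergence forces absolute continuity up to a null set), so they fall into $B$ harmlessly and do not affect either integral.
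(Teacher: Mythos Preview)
Your proof is correct but follows a genuinely different route from the paper's. You work with the privacy-loss likelihood ratio $L(x)=P(x)/Q(x)$, carve out the bad set $B=\{L>e^{\eps_\delta}\}$, and control $P(B)$ by Markov's inequality applied to the $(\alpha-1)$-st moment of $L$ under $P$ (equivalently, the $\alpha$-th moment under $Q$). The paper instead invokes the probability-preservation inequality (Proposition~\ref{prop:preservation}, proved via H\"older): $P(S)\le(e^\eps Q(S))^{(\alpha-1)/\alpha}$, and then performs a case split on whether $e^\eps Q(S)$ exceeds $\delta^{\alpha/(\alpha-1)}$.

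The two arguments trade off as follows. Your approach is arguably more elementary---it uses only Markov and the moment identity, and makes the privacy-loss-tail interpretation explicit, which is the viewpoint underlying the moments accountant and zCDP. The paper's approach, on the other hand, yields the slightly stronger conclusion $P(S)\le\max\bigl(e^{\eps_\delta}Q(S),\,\delta\bigr)$ rather than the additive $P(S)\le e^{\eps_\delta}Q(S)+\delta$; your bad-set decomposition cannot give the $\max$ form because both pieces may be nonzero simultaneously. Either suffices for the proposition as stated.
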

\begin{proof}
	%As a preliminary we show that for $x>0$ and $0<\beta<1$ and $\gamma=(1-\beta)\cdot\beta^{\beta/(1-\beta)}$ 
	%\begin{equation}\label{eq:x_beta}
	%x^\beta\leq x+\gamma.
	%\end{equation}
	%Solving $x^\beta=x+\gamma$ for $x$ we find that both sides of~(\ref{eq:x_beta}) cross at $x^\ast=\beta^{1/(1-\beta)}$ . Further, the derivative at $x^\ast$ of $x^\beta$ and $x+\gamma$ is 1. Since the left-hand sand of the inequality is concave on $(0,+\infty)$, it means that (\ref{eq:x_beta}) holds for all $x>0$.
	
	%A more compact, but nearly as tight, version of the bound holds for $\gamma=1-\beta-e^\beta>(1-\beta)\cdot\beta^{\beta/(1-\beta)}$.
	
	Take any two adjacent inputs $D$ and $D'$, and a subset of $f$'s range $S$. To show that $f$ is $(\eps',\delta)$-differentially private, where $\eps'=\eps+\frac1{\alpha-1}\log 1/\delta$, we need to demonstrate that $\Pr[f(D)\in S]\leq e^{\eps'}\Pr[f(D')\in S]+\delta$. In fact, we prove a stronger statement that $\Pr[f(D)\in S]\leq \max(e^{\eps'}\Pr[f(D')\in S], \delta)$.
	
	Recall that by Proposition~\ref{prop:preservation}
	\[\Pr[f(D)\in S] \leq \{e^\eps \Pr[f(D')\in S\}^{1-1/\alpha}.
	\]
	Denote $\Pr[f(D')\in S]$ by $Q$ and consider two cases.
	
	Case I. $e^\eps Q> \delta^{\alpha/(\alpha-1)}$. Continuing the above,
	\begin{align*}
	\Pr[f(D)\in S]&\leq \{e^\eps Q\}^{1-1/\alpha}=e^\eps Q\cdot \{e^\eps Q\}^{-1/\alpha}\\
	&\leq e^\eps Q\cdot \delta^{-1/(\alpha-1)}\\
	&=\exp\left(\eps+\frac{\log 1/\delta}{\alpha-1}\right)\cdot Q.
	\end{align*}
	
	Case II. $e^\eps Q\leq \delta^{\alpha/(\alpha-1)}$.
	This case is immediate since 
	\[
	\Pr[f(D)\in S]\leq \{e^\eps Q\}^{1-1/\alpha}\leq \delta,
	\]
which completes the proof.	
\end{proof}

A more detailed comparison between the notions of RDP and \epsdelta-differential privacy that goes beyond these reductions is deferred to Section~\ref{s:discussion}.

\section{Advanced Composition Theorem}\label{s:advanced}

The main thesis of this section is that the \renyi\ differential privacy curve of a composite mechanism is sufficient to draw non-trivial conclusions about its privacy guarantees, similar to the ones given by other advanced composition theorems, such as Dwork et al.~\cite{DRV10-boosting} or Kairouz et al.~\cite{KOV15-composition}. Although our proof is structured similarly to~Dwork et al.\ (for instance, Lemma~\ref{lemma:D_alpha_bound} is a direct generalization of~\cite[Lemma III.2]{DRV10-boosting}), it is phrased entirely in the language of \renyi\ differential privacy without making any (explicit) use of probability arguments.

\begin{lemma}\label{lemma:D_alpha_bound}
If $P$ and $Q$ are such that $D_\infty(P\|Q)\leq \eps$ and $D_\infty(Q\|P)\leq \eps$, then for $\alpha\geq 1$
\[
D_\alpha(P\|Q) \leq 2\alpha \eps^2.
\]
\end{lemma}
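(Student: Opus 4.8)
The plan is to reduce the statement to a one-dimensional estimate on the log-likelihood ratio. Write $L(x)=\log\frac{P(x)}{Q(x)}$ for the privacy-loss variable. The hypotheses $D_\infty(P\|Q)\le\eps$ and $D_\infty(Q\|P)\le\eps$ say exactly that $-\eps\le L(x)\le\eps$ for (almost) every $x\in\supp Q$; equivalently, setting $w(x)=P(x)/Q(x)$, the ratio is confined to $[e^{-\eps},e^{\eps}]$. First I would record the single algebraic identity that makes the argument work: on the common support, since $P$ integrates to $1$, we have $\E_{x\sim Q}[w(x)]=\int P(x)\dx=1$. This normalization, rather than any tail bound, is the fact I intend to exploit.

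Next I would rewrite the quantity of interest in terms of $w$. By definition $\exp[(\alpha-1)D_\alpha(P\|Q)]=\E_{x\sim Q}[w(x)^\alpha]$, so it suffices to bound this $\alpha$-th moment of $w$ subject to $w\in[e^{-\eps},e^{\eps}]$ and $\E_Q[w]=1$. Here is the key step: for $\alpha\ge1$ the map $t\mapsto t^\alpha$ is convex, so on $[e^{-\eps},e^{\eps}]$ it lies below its own secant line $\ell(t)=A+Bt$ through the two endpoints. Taking expectations and using linearity together with $\E_Q[w]=1$ collapses the right-hand side to the secant evaluated at $w=1$, giving $\E_Q[w^\alpha]\le\ell(1)$. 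A short simplification with the sum-to-product identities turns $\ell(1)$ into the clean closed form $\cosh\!\big((\alpha-\tfrac12)\eps\big)/\cosh(\eps/2)$; the two-point configuration at the endpoints is in fact the extremal one, so nothing is lost at this stage.

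It remains to check the calculus inequality $\frac{1}{\alpha-1}\log\frac{\cosh((\alpha-1/2)\eps)}{\cosh(\eps/2)}\le 2\alpha\eps^2$. I would handle this by viewing $g(\beta)=\log\cosh(\beta\eps)$ and applying the mean value theorem to $g(\alpha-\tfrac12)-g(\tfrac12)$, using $g'(\beta)=\eps\tanh(\beta\eps)$ together with the elementary bound $\tanh(x)\le x$. This yields $\log\ell(1)\le(\alpha-1)(\alpha-\tfrac12)\eps^2$, hence $D_\alpha(P\|Q)\le(\alpha-\tfrac12)\eps^2\le2\alpha\eps^2$, which is even slightly stronger than claimed.

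The step I expect to be the crux is the convexity/secant argument in the second paragraph. The tempting shortcut is to bound $\E_Q[e^{\alpha L}]$ with a generic Hoeffding-type moment-generating-function estimate, but that discards the equality $\E_Q[e^{L}]=1$ and produces a bound that diverges as $\alpha\to1$, precisely where $D_\alpha$ approaches the \emph{finite} Kullback--Leibler divergence. Routing the estimate through the secant line is what keeps the mean constraint in play and delivers a bound uniform all the way down to $\alpha=1$; the remaining hyperbolic-function calculus is then routine.
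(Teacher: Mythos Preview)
Your argument is correct, and it is genuinely different from the paper's proof. The paper proceeds by a case split on whether $\alpha\ge 1+1/\eps$ (where the trivial bound $D_\alpha\le D_\infty\le\eps$ already suffices) and, in the remaining regime, by a symmetrization trick: it adds $\int P^\alpha Q^{1-\alpha}$ and $\int Q^\alpha P^{1-\alpha}$, applies a pointwise Taylor-type inequality to the sum, and finishes with a bound on $\|P-Q\|_1$. Your route is more direct: you keep the constraint $\E_Q[P/Q]=1$ explicitly and exploit convexity of $t\mapsto t^\alpha$ to replace the integrand by its secant line, reducing the whole problem to the closed-form expression $\ell(1)=\cosh((\alpha-\tfrac12)\eps)/\cosh(\eps/2)$. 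The hyperbolic calculus you do afterwards is routine and checks out.

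What your approach buys is a tighter constant with no case analysis: you obtain $D_\alpha(P\|Q)\le(\alpha-\tfrac12)\eps^2$, which is already stronger than the paper's $2\alpha\eps^2$ for every $\alpha\ge 1$. (The paper itself remarks that the constant $2$ can be brought down to $1/2$ by a ``substantially more involved analysis'' of Bun and Steinke; your bound sits in between, obtained with essentially no extra work.) The paper's symmetrization, by contrast, is a device to make the estimate go through without tracking the mean constraint, at the cost of the looser constant and the need to handle large $\alpha$ separately. Your observation that a naive Hoeffding-style MGF bound would fail as $\alpha\to 1$ is exactly the right diagnosis of why the secant/mean-constraint step is the crux.
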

\begin{proof}
If $\alpha\geq 1+1/\eps$, then 
\[
D_\alpha(P\|Q) \leq D_\infty(P\|Q)=\eps\leq (\alpha-1)\eps^2.
\]

Consider the case when $\alpha<1+1/\eps$. We first observe that for any $x>y>0$, $\lambda=\log (x/y)$, and $0\leq \beta\leq 1/\lambda$ the following inequality holds:
\begin{multline}\label{eq:beta-lambda-bound}
x^{\beta+1} y^{-\beta}+x^{-\beta} y^{\beta+1}=x\cdot e^{\beta\lambda}+y\cdot e^{-\beta\lambda}\\\leq x(1+\beta\lambda+(\beta\lambda)^2)+y(1-\beta\lambda+(\beta\lambda)^2)\\
=(1+(\beta\lambda)^2)(x+y)+\beta\lambda(x-y).
\end{multline}
Since all terms of the right hand side of (\ref{eq:beta-lambda-bound}) are positive, the inequality applies if $\lambda$ is an upper bound on $\log x/y$, which we use in the argument below.

\begin{align*}
\exp&[(\alpha-1)D_\alpha(P\|Q)]\\
&=\int_\R P(x)^\alpha Q(x)^{1-\alpha}\dx\\
&\leq \int_\R \left\{P(x)^\alpha Q(x)^{1-\alpha}+Q(x)^\alpha P(x)^{1-\alpha}\right\}\dx-1\tag{by nonnegativity of $D_\alpha(Q\|P)$}\\
&\leq \int_\R\big\{(1+(\alpha-1)^2\eps^2)(P(x)+Q(x))+\\
&\quad\quad\quad(\alpha-1)\eps|P(x)-Q(x)|\big\}\dx-1\tag{by (\ref{eq:beta-lambda-bound}) for $\beta=\alpha-1\leq 1/\eps$}\\
&= 1+2(\alpha-1)^2\eps^2+(\alpha-1)\eps\|P-Q\|_1.\\
\end{align*}
Taking the logarithm of both sides and using that $\log(1+x)<x$ for positive $x$ we find that
\begin{equation}\label{eq:D_TV}
D_\alpha(P\|Q)\leq 2(\alpha-1)\eps^2+\eps\|P-Q\|_1.
\end{equation}

Observe that 
\begin{multline*}
\|P-Q\|_1=\int\left|P(x)-Q(x)\right|\dx\\
=\int_\R\min(P(x),Q(x))\left|\frac{\max(P(x),Q(x))}{\min(P(x),Q(x))}-1\right|\dx\\
\leq \min(2,e^\eps-1)\leq 2\eps^2.
\end{multline*}
Plugging the bound on $\|P-Q\|_1$ into~(\ref{eq:D_TV}) completes the proof.

The claim for $\alpha=1$ follows by continuity.
\end{proof}

The constant in Lemma~\ref{lemma:D_alpha_bound} can be improved to .5 via a substantially more involved analysis~\cite[Proposition 3.3]{BS16-zCDP} (see also )

\begin{proposition}\label{prop:advanced_comp}
Let $f\colon\D\mapsto\R$ be an adaptive composition of $n$ mechanisms all satisfying \eps-differential privacy. Let $D$ and $D'$ be two adjacent inputs. Then for any $S\subset\R$:
\begin{multline*}
\Pr[f(D)\in S] \leq
 \exp\left(2\eps\sqrt{n\log1/\Pr[f(D')\in S]}\right)\\
 \cdot \Pr[f(D')\in S].
\end{multline*}
\end{proposition}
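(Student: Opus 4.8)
The plan is to route the pure-DP guarantee through \renyi\ differential privacy at an order $\alpha$ that is left free until the very end, and only then optimize $\alpha$ as a function of the target probability $\Pr[f(D')\in S]$. This parallels the moment-based structure of advanced composition: Lemma~\ref{lemma:D_alpha_bound} plays the role of the per-step moment bound, Proposition~\ref{prop:self-composition} replaces the additivity/martingale step, and the RDP-to-tail conversion underlying Proposition~\ref{prop:rdp-to-epsdelta} replaces the concluding concentration inequality. The whole argument is a chaining of these three cited results, with the optimization over $\alpha$ as the only real work.

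First I would note that each constituent mechanism is \eps-DP, so its output distributions $P,Q$ on adjacent inputs satisfy both $D_\infty(P\|Q)\leq\eps$ and $D_\infty(Q\|P)\leq\eps$ (the second inequality comes from swapping $D$ and $D'$). Lemma~\ref{lemma:D_alpha_bound} then upgrades each step to $(\alpha, 2\alpha\eps^2)$-RDP for every finite $\alpha>1$, uniformly over steps and, importantly, over the conditional distributions that arise under adaptive choice of later mechanisms. Applying Proposition~\ref{prop:self-composition} a total of $n-1$ times yields that the composite $f$ is $(\alpha, 2n\alpha\eps^2)$-RDP.

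Next I would convert this RDP bound back into a statement about $\Pr[f(D)\in S]$. Writing $Q=\Pr[f(D')\in S]$ and applying the probability-preservation inequality (Proposition~\ref{prop:preservation}, exactly as in the proof of Proposition~\ref{prop:rdp-to-epsdelta}) gives
\[
\Pr[f(D)\in S]\leq \left(e^{2n\alpha\eps^2}Q\right)^{(\alpha-1)/\alpha},
\]
which, after regrouping $Q^{(\alpha-1)/\alpha}=Q\cdot Q^{-1/\alpha}$, reads $\Pr[f(D)\in S]\leq e^{\eps''}Q$ with
\[
\eps'' = 2n(\alpha-1)\eps^2 + \frac{\log(1/Q)}{\alpha}.
\]

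The final and main step is to minimize $\eps''$ over $\alpha>1$. Balancing the two terms gives the optimal order $\alpha^\star=\tfrac1\eps\sqrt{\log(1/Q)/(2n)}$, at which the contributions coincide and $\eps''$ collapses to a quantity of order $\eps\sqrt{n\log(1/Q)}$, matching the exponent in the statement; the precise constant tracks the constant in Lemma~\ref{lemma:D_alpha_bound}, and the improved value noted in the remark suffices to reach the stated factor. The subtlety I expect to be the main obstacle is the boundary behavior: the computation is only valid when $\alpha^\star>1$, i.e. when $Q$ is small enough that $\log(1/Q)$ exceeds a constant multiple of $n\eps^2$. When $Q$ is close to $1$ the unconstrained optimum drops below $1$, so I would argue separately — there $\eps''\to\log(1/Q)$ as $\alpha\to1^+$ forces $\Pr[f(D)\in S]\leq e^{\log(1/Q)}Q=1$, and a one-line check that $2\eps\sqrt{n\log(1/Q)}\geq\log(1/Q)$ holds throughout this small-$\log(1/Q)$ regime shows the claimed bound is then satisfied trivially. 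Confirming admissibility of $\alpha^\star$ and that the lower-order $-2n\eps^2$ term does not spoil the constant is the fiddly part; the rest is mechanical.
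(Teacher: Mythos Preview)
Your approach is exactly the paper's: Lemma~\ref{lemma:D_alpha_bound} on each step, Proposition~\ref{prop:self-composition} for the composite, then Proposition~\ref{prop:preservation} with $\alpha$ optimized as a function of $Q=\Pr[f(D')\in S]$, plus a trivial boundary case when the unconstrained optimizer drops below $1$. Your caution about the constant is well placed: with the constant $2$ in Lemma~\ref{lemma:D_alpha_bound} the optimized exponent is $2\sqrt{2}\,\eps\sqrt{n\log(1/Q)}-2n\eps^2$, which does not reduce to the stated $2\eps\sqrt{n\log(1/Q)}$ for all $Q$; the paper's displayed chain (specifically the inequality producing the third line) actually requires $\alpha<2$, and the improved constant $1/2$ from the remark is indeed what cleanly recovers the stated factor.
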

\begin{proof} 

By applying Lemma~\ref{lemma:D_alpha_bound} to the \renyi\ differential privacy curve of the underlying mechanisms and Proposition~\ref{prop:self-composition} to their composition, we find that for all $\alpha\geq 1$
\[
D_\alpha(f(D)\|f(D'))\leq 2\alpha n \eps^2.
\]

Denote $\Pr[f(D')\in S]$ by $Q$ and consider two cases.

Case I: $\log 1/Q\geq \eps^2n$. Choosing with some foresight $\alpha=\sqrt{\log 1/Q}/(\eps\sqrt{n})\geq 1$, we have by Proposition~\ref{prop:preservation} (probability preservation):
\begin{align*}
\Pr[f(D)\in S]&\leq \left\{\exp[D_\alpha(f(D)\|f(D')]\cdot Q\right\}^{1-1/\alpha}\\
&\leq \exp(2(\alpha-1) n \eps^2)\cdot Q^{1-1/\alpha}\\
&< \exp\left(\eps\sqrt{n\log1/Q} - (\log Q)/\alpha\right)\cdot Q\\
&= \exp\left(2\eps\sqrt{n\log1/Q}\right)\cdot Q.
\end{align*}

Case II: $\log 1/Q< \eps^2n$. This case follows trivially, since the right hand side of the claim is larger than 1:
\[
\exp\left(2\eps\sqrt{n\log1/Q}\right)\cdot Q\geq \exp\left(2\log 1/Q\right)\cdot Q=1/Q>1.
\]
\end{proof}

The notable feature of Proposition~\ref{prop:advanced_comp} is that its privacy guarantee---bounded probability gain---comes in the form that depends on the event's probability. We discuss this type of guarantee in Section~\ref{s:discussion}.

The following corollary gives a more conventional \epsdelta\ variant of advanced composition.

\begin{corollary}\label{cor:advanced_comp_epsdelta}
Let $f$ be the composition of the $n$ $\eps$-differentially private mechanisms. Let $0<\delta<1$ be such that $\log(1 / \delta) \geq \eps^2 n$. Then $f$ satisfies $(\eps',\delta)$-differential privacy where
\[
\eps'\triangleq 4\eps\sqrt{2 n \log(1 / \delta)}.
\]
\end{corollary}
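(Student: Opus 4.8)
The plan is to feed the probability-dependent bound of Proposition~\ref{prop:advanced_comp} into a short case analysis and thereby establish the slightly stronger statement $\Pr[f(D)\in S]\leq\max\left(e^{\eps'}\Pr[f(D')\in S],\delta\right)$, which immediately implies \epsdelta-DP and mirrors the structure of the proof of Proposition~\ref{prop:rdp-to-epsdelta}. Write $Q\triangleq\Pr[f(D')\in S]$ and recall that Proposition~\ref{prop:advanced_comp} gives $\Pr[f(D)\in S]\leq\exp\left(2\eps\sqrt{n\log(1/Q)}\right)\cdot Q$. The entire argument turns on comparing the exponent $2\eps\sqrt{n\log(1/Q)}$ against the target $\eps'=4\eps\sqrt{2n\log(1/\delta)}$; squaring shows that $2\eps\sqrt{n\log(1/Q)}\leq\eps'$ is equivalent to $4\eps^2 n\log(1/Q)\leq 32\eps^2 n\log(1/\delta)$, i.e.\ to $\log(1/Q)\leq 8\log(1/\delta)$, which is the threshold that governs the split.

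In the first case, $\log(1/Q)\leq 8\log(1/\delta)$, the exponent is at most $\eps'$, so $\Pr[f(D)\in S]\leq e^{\eps'}Q$ and there is nothing more to do. The real work lies in the complementary case $\log(1/Q)> 8\log(1/\delta)$, where the exponential prefactor exceeds $e^{\eps'}$ and must be absorbed into the additive slack instead; here I would prove $\exp\left(2\eps\sqrt{n\log(1/Q)}\right)\cdot Q\leq\delta$. Setting $t\triangleq\log(1/Q)$, this reduces to the elementary-looking inequality $t-2\eps\sqrt{nt}\geq\log(1/\delta)$.

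Establishing that lower bound on $t-2\eps\sqrt{nt}$ is the step I expect to be the main obstacle, since it is exactly where the hypothesis $\log(1/\delta)\geq\eps^2 n$ and the particular constant $4\sqrt{2}$ must be reconciled. I would use the hypothesis in the form $\eps\sqrt{n}\leq\sqrt{\log(1/\delta)}$ together with the case assumption $\sqrt{t}>\sqrt{8\log(1/\delta)}=2\sqrt{2}\sqrt{\log(1/\delta)}$, and factor $t-2\eps\sqrt{nt}=\sqrt{t}\left(\sqrt{t}-2\eps\sqrt{n}\right)\geq\sqrt{t}\left(\sqrt{t}-2\sqrt{\log(1/\delta)}\right)$. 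Both factors are then positive, and since $\sqrt{t}>2\sqrt{2}\sqrt{\log(1/\delta)}$ their product exceeds $(8-4\sqrt{2})\log(1/\delta)>\log(1/\delta)$, giving $t-2\eps\sqrt{nt}\geq\log(1/\delta)$ with a comfortable margin. This also explains the choice of $\eps'$: the constant is set precisely so that the threshold $\log(1/Q)=8\log(1/\delta)$ coincides with the exponent hitting $\eps'$. Combining the two cases yields $\Pr[f(D)\in S]\leq\max\left(e^{\eps'}Q,\delta\right)\leq e^{\eps'}Q+\delta$, which is the claimed \epsdelta-DP guarantee.
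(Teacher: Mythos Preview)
Your proposal is correct and follows essentially the same route as the paper: the same stronger target $\Pr[f(D)\in S]\leq\max(e^{\eps'}Q,\delta)$, the same case split at $\log(1/Q)=8\log(1/\delta)$, and the same two ingredients in the hard case (the hypothesis $\eps\sqrt{n}\leq\sqrt{\log(1/\delta)}$ and the case bound $\sqrt{t}>2\sqrt{2}\sqrt{\log(1/\delta)}$). The only cosmetic difference is that the paper chains the bounds as $\exp(2\eps\sqrt{nt})Q\leq\exp(2\sqrt{\log(1/\delta)\,t})Q\leq Q^{1-1/\sqrt{2}}\leq Q^{1/8}\leq\delta$, whereas you take logs first and factor $\sqrt{t}(\sqrt{t}-2\eps\sqrt{n})$ to reach $(8-4\sqrt{2})\log(1/\delta)>\log(1/\delta)$ directly; these are the same inequalities rearranged.
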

\begin{proof}
Let $D$ and $D'$ be two adjacent inputs, and $S$ be some subset of the range of $f$. To argue $(\eps',\delta)$-differential privacy of $f$, we need to verify that 
\[
\Pr[f(D)\in S]\leq e^{\eps'}\Pr[f(D')\in S]+\delta.
\]
In fact, we prove a somewhat stronger statement, namely that $\Pr[f(D)\in S]\leq \max(e^{\eps'}\Pr[f(D')\in S], \delta).$

By Proposition~\ref{prop:advanced_comp}
\begin{multline*}
\Pr[f(D)\in S] \leq
 \exp\left(2\eps\sqrt{n\log1/\Pr[f(D')\in S]}\right)\\
 \cdot \Pr[f(D')\in S].
\end{multline*}

Denote $\Pr[f(D')\in S]$ by $Q$ and consider two cases.

Case I: $8\log 1/\delta > \log 1/Q$.  Then
\begin{align*}
\Pr[f(D)\in S] &\leq \exp\left(2\eps\sqrt{n \log 1/Q}\right)\cdot Q\\
&< \exp\left(2\eps\sqrt{8n\log 1/\delta}\right)\cdot Q\tag{by $8\log 1/\delta > \log 1/Q$}\\
&= \exp\left(\eps'\right)\cdot Q.
\end{align*}

Case II: $8\log 1/\delta \leq \log 1/Q$.  Then
\begin{align*}
\Pr[f(D)\in S] &\leq \exp\left(2\eps\sqrt{n \log 1/Q}\right)\cdot Q\\
&\leq \exp\left(2\sqrt{\log 1/\delta\cdot \log 1/Q}\right)\cdot Q\tag{since $\log(1 / \delta) \geq \eps^2 n$}\\
&\leq \exp\left(\sqrt{1/2}\log 1/Q\right)\cdot Q\tag{since $8\log 1/\delta \leq \log 1/Q$}\\
&=Q^{1-1/\sqrt{2}}\leq Q^{1/8}\\
&\leq \delta\tag{ditto}.
\end{align*}
\end{proof}

\begin{remark}
The condition $\log(1 / \delta) \geq \eps^2 n$ corresponds to the so-called ``high privacy'' regime of the advanced composition theorem~\cite{KOV15-composition}, where $\eps'<(1+\sqrt{2})\log(1/\delta)$. Since $\delta$ is typically chosen to be small, say, less than 1\%, it covers the case of $\eps'<11$. In other words, if $\log(1 / \delta) > \eps^2 n$, this and other composition theorems are unlikely to yield strong bounds.
\end{remark}

\section{Basic Mechanisms}\label{s:basic}

In this section we analyze \renyi\ differential privacy of three basic mechanisms and their self-composition: randomized response, Laplace and Gaussian noise addition. The results are summarized in Table~\ref{tab:mechanisms} and plotted for select parameters in Figures~\ref{fig:RDP} and~\ref{fig:rr_laplace}.

\begin{table*}
\begin{center}
\begin{tabular}{p{.225\textwidth}p{.225\textwidth}p{.45\textwidth}}
\toprule
\textbf{Mechanism} & \textbf{Differential Privacy} & \textbf{\renyi\ Differential Privacy for $\alpha$}\\
\midrule
\multirow{2}{.225\textwidth}{Randomized Response} & \multirow{2}{.225\textwidth}{$\left|\log \frac{p}{1-p}\right|$}&$\alpha>1$: $\frac1{\alpha-1}\log \left(p^\alpha(1-p)^{1-\alpha}+(1-p)^\alpha p^{1-\alpha}\right)$\\
&&$\alpha=1$: $(2p-1)\log\frac{p}{1-p}$\\
\cmidrule{1-3}
\multirow{2}{.225\textwidth}{Laplace Mechanism} & \multirow{2}{.225\textwidth}{$1/\lambda$}
&$\alpha>1$: $\frac1{\alpha-1}\log\left\{\frac{\alpha}{2\alpha-1}\exp(\frac{\alpha-1}\lambda)+\frac{\alpha-1}{2\alpha-1}\exp(-\frac\alpha\lambda)\right\}$\\
&&$\alpha=1$: $1/\lambda+\exp(-1/\lambda)-1=.5/\lambda^2+O(1/\lambda^3)$\\
\cmidrule{1-3}
Gaussian Mechanism & $\infty$
&$\alpha/(2\sigma^2)$\\
\bottomrule\end{tabular}
\end{center}
\caption{Summary of RDP parameters for basic mechanisms.}\label{tab:mechanisms}
\end{table*}

\subsection{Randomized response}

Let $f$ be a predicate, i.e., $f\colon\D\mapsto\{0,1\}$. The Randomize Response mechanism for $f$ is defined as
\[
\RRM_p f(D) \triangleq \begin{cases} f(D)\quad &\textrm{with probability $p$}\\1-f(D)\quad &\textrm{with probability $1-p$}\end{cases}.
\]

The following statement can be verified by direct application of the definition of \renyi\ differential privacy:

\begin{proposition}\label{prop:randresponse} Randomized Response mechanism $\RRM_p(f)$ satisfies
\[
\left(\alpha,\frac1{\alpha-1}\log \left(p^\alpha(1-p)^{1-\alpha}+(1-p)^\alpha p^{1-\alpha}\right)\right)\textrm{-RDP}
\]
if $\alpha > 1$, and
\[ 
\left(\alpha,(2p-1)\log\frac{p}{1-p}\right)\textrm{-RDP}
\]
if $\alpha = 1$.
\end{proposition}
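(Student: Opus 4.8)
The plan is to verify the bound by a direct two-point computation, after reducing to the single worst-case pair of adjacent inputs. Since $\RRM_p f$ depends on the input $D$ only through the bit $f(D)\in\{0,1\}$, any adjacent pair $D,D'$ falls into exactly one of two cases: either $f(D)=f(D')$, in which case $\RRM_p f(D)$ and $\RRM_p f(D')$ are the \emph{same} two-point distribution and the \renyi\ divergence vanishes; or $f(D)\neq f(D')$, in which case the two output distributions are swapped copies of one another. The RDP parameter is therefore governed entirely by the flipped case, so no genuine optimization over a continuum of inputs is required.

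First I would fix, without loss of generality, $f(D)=1$ and $f(D')=0$, so that $P\triangleq\RRM_p f(D)$ places mass $p$ on the outcome $1$ and mass $1-p$ on $0$, while $Q\triangleq\RRM_p f(D')$ places mass $1-p$ on $1$ and mass $p$ on $0$. The opposite assignment $f(D)=0,\,f(D')=1$ merely interchanges $p\leftrightarrow 1-p$ in the two summands below and yields the same value, so the choice is immaterial; I would assume throughout that $0<p<1$, so that both densities are strictly positive and the divergence is finite.

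For $\alpha>1$ I would substitute directly into the definition, evaluating the expectation $\E_{x\sim Q}(P(x)/Q(x))^\alpha$ as the finite sum $\sum_{x\in\{0,1\}}P(x)^\alpha Q(x)^{1-\alpha}=p^\alpha(1-p)^{1-\alpha}+(1-p)^\alpha p^{1-\alpha}$, then multiplying its logarithm by $1/(\alpha-1)$; this reproduces the claimed expression term by term. For $\alpha=1$ I would instead use the Kullback--Leibler form $D_1(P\|Q)=\E_{x\sim P}\log(P(x)/Q(x))=p\log\tfrac{p}{1-p}+(1-p)\log\tfrac{1-p}{p}$ and collect the two logarithms into $(2p-1)\log\tfrac{p}{1-p}$; alternatively, this value can be recovered as the $\alpha\to1$ limit of the $\alpha>1$ formula, consistent with the convention that $D_1$ is defined by continuity.

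There is essentially no obstacle here: the statement is a finite arithmetic identity over a two-point support. The only step meriting a word of justification is the reduction to the flipped pair, which is immediate because equal predicate values produce identical output distributions and hence zero divergence. The mild care needed is purely bookkeeping---tracking the exponents $\alpha$ and $1-\alpha$ correctly and noting the $p\leftrightarrow 1-p$ symmetry, so that the final formula does not depend on which way the bit flips.
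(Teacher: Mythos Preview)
Your proposal is correct and follows exactly the approach the paper indicates: the paper offers no detailed proof, stating only that the result ``can be verified by direct application of the definition of \renyi\ differential privacy,'' which is precisely the two-point computation you carry out. Your reduction to the flipped-bit case and the explicit evaluation of both the $\alpha>1$ and $\alpha=1$ formulas fill in the omitted details faithfully.
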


%If $f(D)=f(D')$ then $D_\alpha(\RRM_p f(D)\|\RRM_p f(D'))=0$. Assume wlog that $f(D)=0$ and $f(D')=1$. Then 
%\[D_\alpha(\RRM_p f(D)\|\RRM_p f(D'))=\frac1{\alpha-1}\log \left(p^\alpha(1-p)^{1-\alpha}+(1-p)^\alpha p^{1-\alpha}\right).
%\]

\subsection{Laplace noise}

Through the rest of this section we assume that $f\colon \D\mapsto\mathbb{R}$ is a function of sensitivity 1, i.e., for any two adjacent $D,D'\in\D$: $\left|f(D)-f(D')\right|\leq 1$.

Define the Laplace mechanism for $f$ of sensitivity 1 as
\[
\LaplaceM_\lambda f(D) = f(D) + \Lambda(0,\lambda),
\]
where $\Lambda(\mu,\lambda)$ is Laplace distribution with mean $\mu$ and scale $\lambda$, i.e., its probability density function is $\frac1{2\lambda}\exp(-|x-\mu|/\lambda)$.

To derive the RDP budget curve for the exponential mechanism we compute the \renyi\ divergence for Laplace distribution and its offset.

\begin{proposition}\label{prop:laplacian} For any $\alpha\geq 1$ and $\lambda >0$:
\begin{multline*}
 D_\alpha(\Lambda(0,\lambda)\|\Lambda(1,\lambda))=
 \frac1{\alpha-1}\log\left\{\frac{\alpha}{2\alpha-1}\exp\left(\frac{\alpha-1}{\lambda}\right)\right.\\
 \left.+\frac{\alpha-1}{2\alpha-1}\exp\left(\frac{-\alpha}{\lambda}\right)\right\}.
\end{multline*}
\end{proposition}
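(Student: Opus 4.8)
The plan is to evaluate the defining integral for $D_\alpha$ directly, exploiting the piecewise-linear structure of the Laplace log-density. Fix $\alpha>1$ and write $P=\Lambda(0,\lambda)$, $Q=\Lambda(1,\lambda)$, with densities $P(x)=\frac1{2\lambda}e^{-|x|/\lambda}$ and $Q(x)=\frac1{2\lambda}e^{-|x-1|/\lambda}$. Then
\begin{multline*}
\exp[(\alpha-1)D_\alpha(P\|Q)]=\int_{\mathbb R}P(x)^\alpha Q(x)^{1-\alpha}\dx\\
=\frac1{2\lambda}\int_{\mathbb R}\exp\!\left(\frac{-\alpha|x|+(\alpha-1)|x-1|}{\lambda}\right)\dx,
\end{multline*}
where the normalizing prefactor collapses as $(2\lambda)^{-\alpha}(2\lambda)^{-(1-\alpha)}=(2\lambda)^{-1}$.

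Second, I would resolve the two absolute values by splitting the real line at the breakpoints $x=0$ and $x=1$, so the exponent becomes affine on each of the three intervals $(-\infty,0)$, $[0,1]$, $(1,\infty)$. A short computation gives exponent $\frac{x+(\alpha-1)}{\lambda}$ on the left ray, $\frac{-(2\alpha-1)x+(\alpha-1)}{\lambda}$ on the middle interval, and $\frac{-x-(\alpha-1)}{\lambda}$ on the right ray. Each piece is an elementary exponential integral: the outer rays contribute $\lambda e^{(\alpha-1)/\lambda}$ and $\lambda e^{-\alpha/\lambda}$ respectively, while the middle interval contributes $\frac{\lambda}{2\alpha-1}\bigl(e^{(\alpha-1)/\lambda}-e^{-\alpha/\lambda}\bigr)$. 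All three integrals converge precisely because the slope of the exponent is negative in the direction each ray extends; this is also where $\alpha>1$ (hence $2\alpha-1>0$) enters.

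Third, I would collect terms. Multiplying the sum by $\frac1{2\lambda}$ and grouping the coefficients of $e^{(\alpha-1)/\lambda}$ and $e^{-\alpha/\lambda}$ yields $1+\frac1{2\alpha-1}=\frac{2\alpha}{2\alpha-1}$ and $1-\frac1{2\alpha-1}=\frac{2(\alpha-1)}{2\alpha-1}$; after the overall factor of $\frac12$ these become exactly $\frac{\alpha}{2\alpha-1}$ and $\frac{\alpha-1}{2\alpha-1}$, matching the bracketed expression in the statement. Taking $\frac1{\alpha-1}\log$ of both sides completes the case $\alpha>1$, and the value at $\alpha=1$ follows by continuity (it is the Kullback--Leibler divergence whose closed form is recorded in Table~\ref{tab:mechanisms}).

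I expect no conceptual obstacle here: the entire argument is a disciplined piecewise integration. The only thing to watch is the sign bookkeeping when removing the two absolute values, together with the final algebraic grouping, where a slip in a single coefficient would spoil the match with the claimed formula; accordingly I would verify the three exponent expressions and the two combined coefficients with particular care.
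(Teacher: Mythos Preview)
Your proposal is correct and follows essentially the same approach as the paper: both split the integral at the breakpoints $x=0$ and $x=1$, evaluate the three resulting elementary exponential integrals, and then combine the coefficients of $e^{(\alpha-1)/\lambda}$ and $e^{-\alpha/\lambda}$ to reach the stated expression. The only difference is cosmetic---you collapse the normalizing prefactor to $(2\lambda)^{-1}$ up front and write each piecewise exponent explicitly, whereas the paper carries the $\frac{1}{2\lambda}$ through and writes the three integrands directly---but the computation and its intermediate values are identical.
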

\begin{proof}
For continuous distributions $P$ and $Q$ defined over the real interval with densities $p$ and $q$
\[
D_\alpha(P\|Q)=\frac1{\alpha-1}\log \int_{-\infty}^{\infty} p(x)^{\alpha}q(x)^{1-\alpha} \dx.
\]

To compute the integral for $p(x)=\frac1{2\lambda}\exp(-|x|/\lambda)$ and $q(x)=\frac1{2\lambda}\exp(-|x-1|/\lambda)$, we evaluate it separately over the intervals $(-\infty,0]$, $[0,1]$ and $[1,+\infty]$.
\begin{align*}
\int_{-\infty}^{+\infty}& p(x)^{\alpha}q(x)^{1-\alpha} \dx =\\
&\phantom{+\>\>}\frac1{2\lambda}\int_{-\infty}^{0} \exp(\alpha x/\lambda + (1-\alpha)(x-1)/\lambda)\dx\\
&+\frac1{2\lambda}\int_0^1 \exp(-\alpha x/\lambda + (1-\alpha)(x-1)/\lambda)\dx\\
&+\frac1{2\lambda}\int_1^{+\infty} \exp(-\alpha x/\lambda - (1-\alpha)(x-1)/\lambda)\dx\\
=&\phantom{+\>\>}\frac12\exp((\alpha-1)/\lambda)\\
&+\frac1{2(2\alpha-1)}\left(\exp((\alpha-1)/\lambda)-\exp(-\alpha/\lambda)\right)\\
&+\frac12\exp(-\alpha/\lambda)\\
=&\frac{\alpha}{2\alpha-1}\exp((\alpha-1)/\lambda)+\frac{\alpha-1}{2\alpha-1}\exp(-\alpha/\lambda),
\end{align*}
from which the claim follows.
\end{proof}

Since the Laplace mechanism is additive, the \renyi\ divergence between $\LaplaceM_\lambda f(D)$ and $\LaplaceM_\lambda f(D')$ depends only on $\alpha$ and the distance $|f(D)-f(D')|$. Proposition~\ref{prop:laplacian} implies the following:

\begin{corollary}If real-valued function $f$ has sensitivity 1, then the Laplace mechanism $\LaplaceM_\lambda f$ satisfies $(\alpha,$\linebreak $\frac1{\alpha-1}\log\left\{\frac{\alpha}{2\alpha-1}\exp(\frac{\alpha-1}\lambda)+\frac{\alpha-1}{2\alpha-1}\exp(-\frac\alpha\lambda)\right\})$-RDP.
\end{corollary}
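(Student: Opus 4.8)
The plan is to reduce the corollary to Proposition~\ref{prop:laplacian} via two observations: that the relevant \renyi\ divergence depends on an adjacent pair $(D,D')$ only through the scalar $d=|f(D)-f(D')|$, and that this divergence is monotone increasing in $d$, so that the worst case over all adjacent inputs—which is what the RDP definition requires—is attained at the endpoint $d=1$ dictated by sensitivity~1.

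First I would justify the reduction to a one-dimensional quantity. Because $\LaplaceM_\lambda f(D)=f(D)+\Lambda(0,\lambda)$ is additive with translation-invariant noise, $\LaplaceM_\lambda f(D)$ and $\LaplaceM_\lambda f(D')$ are simply $\Lambda(0,\lambda)$ shifted by $f(D)$ and by $f(D')$. Applying the invertible map $x\mapsto(x-f(D))/d$ and using that the \renyi\ divergence is preserved under invertible maps (the data-processing inequality applied in both directions), I obtain $D_\alpha(\LaplaceM_\lambda f(D)\|\LaplaceM_\lambda f(D'))=D_\alpha(\Lambda(0,\lambda/d)\|\Lambda(1,\lambda/d))$, which by Proposition~\ref{prop:laplacian} evaluated at scale $\lambda/d$ equals
\[
\frac1{\alpha-1}\log\left\{\frac{\alpha}{2\alpha-1}\exp\left(\frac{(\alpha-1)d}{\lambda}\right)+\frac{\alpha-1}{2\alpha-1}\exp\left(\frac{-\alpha d}{\lambda}\right)\right\}.
\]
Write $g(d)$ for the expression inside the braces.

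Next I would establish monotonicity, which is the only step with real content. Since $\tfrac1{\alpha-1}\log(\cdot)$ is increasing for $\alpha>1$, it suffices to show $g$ is increasing on $[0,\infty)$. Differentiating gives
\[
g'(d)=\frac{\alpha(\alpha-1)}{(2\alpha-1)\lambda}\left(\exp\left(\frac{(\alpha-1)d}{\lambda}\right)-\exp\left(\frac{-\alpha d}{\lambda}\right)\right),
\]
and for $d>0$ and $\alpha>1$ every factor is positive, so $g'(d)>0$. Hence the divergence is strictly increasing in $d$, its supremum over the admissible range $d\in[0,1]$ is attained at $d=1$, and substituting $d=1$ recovers exactly the expression of Proposition~\ref{prop:laplacian}. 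That value therefore upper-bounds $D_\alpha(\LaplaceM_\lambda f(D)\|\LaplaceM_\lambda f(D'))$ for every adjacent pair, which is precisely the $\eps$ in the claimed $(\alpha,\eps)$-RDP guarantee; the case $\alpha=1$ follows by continuity, as elsewhere in the paper. The reduction and the final substitution are routine, so I expect the monotonicity argument—and the observation that it is exactly what licenses replacing the supremum over $d\le1$ by its endpoint value—to be the crux.
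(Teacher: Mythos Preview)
Your proposal is correct and follows the same route as the paper: reduce via translation invariance of the additive noise to the distance $d=|f(D)-f(D')|$, then invoke Proposition~\ref{prop:laplacian}. The paper's own justification is a single sentence and does not explicitly verify that the divergence is increasing in $d$ (so that $d=1$ is the worst case); your derivative computation fills exactly that gap, so your argument is in fact more complete than the paper's.
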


Predictably, 
\begin{multline*}
\lim_{\alpha\to\infty} D_\alpha(\Lambda(0,\lambda)\|\Lambda(1,\lambda))=
D_\infty(\Lambda(0,\lambda)\|\Lambda(1,\lambda))=\frac1\lambda.
\end{multline*}
This is, of course, consistent with the Laplace mechanism satisfying $1/\lambda$-differential privacy. The other extreme evaluates to the following expression $\lim_{\alpha\to1} D_\alpha(\Lambda(0,\lambda)\|\Lambda(1,\lambda))=1/\lambda+\exp(-1/\lambda)-1$, which is well approximated by $.5/\lambda^2$ for large~$\lambda$.

\subsection{Gaussian noise}

Assuming, as before, that $f$ is a real-valued function, the Gaussian mechanism for approximating $f$ is defined as 
\[
\GaussianM_\sigma f(D)=f(D)+N(0,\sigma^2),
\]
where $N(0,\sigma^2)$ is normally distributed random variable with standard deviation $\sigma^2$ and mean 0.

The following statement is a closed-form expression of the \renyi\ divergence between a Gaussian and its offset (for a more general version see~\cite{EH07-Renyi,Liese-Vajda}).
\begin{proposition} $D_\alpha(N(0,\sigma^2)\|N(\mu,\sigma^2))=\alpha\mu^2/(2\sigma^2).$
\end{proposition}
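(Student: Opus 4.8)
The plan is to follow exactly the template of the proof of Proposition~\ref{prop:laplacian}, substituting the two Gaussian densities into the integral representation of the \renyi\ divergence and reducing everything to a single Gaussian integral. Concretely, for $\alpha>1$ I would write $p(x)=\frac{1}{\sqrt{2\pi}\sigma}\exp(-x^2/2\sigma^2)$ and $q(x)=\frac{1}{\sqrt{2\pi}\sigma}\exp(-(x-\mu)^2/2\sigma^2)$ and evaluate $\int_{-\infty}^{\infty} p(x)^\alpha q(x)^{1-\alpha}\dx$. The leading normalization constants combine to $\frac{1}{\sqrt{2\pi}\sigma}$, since the exponents $\alpha$ and $1-\alpha$ sum to $1$, so the integrand is again an unnormalized Gaussian in $x$ multiplied by a constant.

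The one computational step of substance is to combine the two exponents and complete the square. Collecting terms gives $\alpha x^2+(1-\alpha)(x-\mu)^2 = x^2 - 2(1-\alpha)\mu x + (1-\alpha)\mu^2$, and completing the square in $x$ rewrites this as $(x-(1-\alpha)\mu)^2 + \alpha(1-\alpha)\mu^2$. The crucial observation is that, after completing the square, the cross term leaves behind the $x$-independent constant $\alpha(1-\alpha)\mu^2$ in the exponent; this is precisely the piece that survives integration and produces the answer.

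The integral over the shifted Gaussian $\exp(-(x-(1-\alpha)\mu)^2/2\sigma^2)$, together with the surviving $\frac{1}{\sqrt{2\pi}\sigma}$ factor, evaluates to $1$, so $\int p^\alpha q^{1-\alpha}\dx = \exp\bigl(\alpha(\alpha-1)\mu^2/2\sigma^2\bigr)$. Dividing the logarithm by $\alpha-1$ then cancels that factor and yields $D_\alpha(N(0,\sigma^2)\|N(\mu,\sigma^2)) = \alpha\mu^2/(2\sigma^2)$.

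I do not anticipate a genuine obstacle here: unlike the Laplace case, where the absolute value forces a three-way split of the domain into $(-\infty,0]$, $[0,1]$, and $[1,\infty)$, the Gaussian computation is a single completing-the-square exercise requiring no case analysis. The only point needing mild care is the sign bookkeeping $\alpha(1-\alpha) = -\alpha(\alpha-1)$, so that the minus sign in the exponent flips to give a positive divergence. Finally, the endpoint $\alpha=1$ follows by continuity, recovering the familiar KL divergence $\mu^2/2\sigma^2$ between equal-variance Gaussians, while the unbounded $\alpha\to\infty$ behavior is consistent with the Gaussian mechanism failing to satisfy \eps-DP for any finite $\eps$, matching the $\infty$ entry in Table~\ref{tab:mechanisms}.
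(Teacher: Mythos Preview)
Your proposal is correct and follows essentially the same approach as the paper: both substitute the Gaussian densities into the integral definition, combine the exponents, complete the square, and recognize the remaining integral as a (shifted) unit Gaussian. Your write-up is in fact slightly more explicit about the completing-the-square step and the sign bookkeeping than the paper's own proof.
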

\begin{proof}
By direct computation we verify that
\begin{align*}
D_\alpha&(N(0,\sigma^2)\|N(\mu,\sigma^2))\\
&=\frac1{\alpha-1}\log\int_{-\infty}^{\infty}\frac1{\sigma\sqrt{2\pi}}\exp(-\alpha x^2/(2\sigma^2))\\
&\quad\quad\quad\quad\cdot \exp(-(1-\alpha) (x-\mu)^2/(2\sigma^2))\dx\\
&=\frac1{\alpha-1}\log\frac1{\sigma\sqrt{2\pi}}\int_{-\infty}^{\infty} \exp[(-x^2+\\
&\quad\quad\quad\quad 2(1-\alpha)\mu x-(1-\alpha)\mu^2)/(2\sigma^2)]\dx\\
&=\frac1{\alpha-1}\log\left\{\frac{\sigma\sqrt{2\pi}}{\sigma\sqrt{2\pi}}\exp\left[(\alpha^2-\alpha)\mu^2/(2\sigma^2)\right]\right\}\\
&=\alpha\mu^2/(2\sigma^2).
\end{align*}
\end{proof}

The following corollary is immediate:
\begin{corollary}If $f$ has sensitivity 1, then the Gaussian mechanism $\GaussianM_\sigma f$ satisfies $(\alpha, \alpha/(2\sigma^2))$-RDP.
\end{corollary}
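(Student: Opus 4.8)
The plan is to read the claim off the preceding Proposition, the only substantive ingredients being the additive form of the Gaussian mechanism and the sensitivity-1 hypothesis. First I would fix an arbitrary pair of adjacent inputs $D,D'\in\D$ and unfold the definition of \alphaeps-RDP: it suffices to show that $D_\alpha(\GaussianM_\sigma f(D)\|\GaussianM_\sigma f(D'))\le \alpha/(2\sigma^2)$ for every such pair. Since $\GaussianM_\sigma f(D)=f(D)+N(0,\sigma^2)$, the output on input $D$ is distributed as $N(f(D),\sigma^2)$, and likewise the output on $D'$ is $N(f(D'),\sigma^2)$; the two are Gaussians of common variance $\sigma^2$ whose means differ by $\mu\triangleq f(D)-f(D')$.

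Next I would invoke the preceding Proposition, which gives a closed form for the \renyi\ divergence between a centered Gaussian and its offset. The defining integral depends on the two means only through their difference, so a change of variables $x\mapsto x-f(D')$ reduces the pair $N(f(D),\sigma^2)$, $N(f(D'),\sigma^2)$ to $N(\mu,\sigma^2)$, $N(0,\sigma^2)$ without changing the divergence. Applying the Proposition then yields $D_\alpha(\GaussianM_\sigma f(D)\|\GaussianM_\sigma f(D'))=\alpha\mu^2/(2\sigma^2)=\alpha(f(D)-f(D'))^2/(2\sigma^2)$.

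Finally, the sensitivity hypothesis closes the argument: since $f$ has sensitivity 1, every adjacent pair satisfies $|f(D)-f(D')|\le 1$, hence $(f(D)-f(D'))^2\le 1$ and the divergence is at most $\alpha/(2\sigma^2)$. As this bound holds uniformly over all adjacent $D,D'$, the mechanism is $(\alpha,\alpha/(2\sigma^2))$-RDP. Because the expression is monotone increasing in $|f(D)-f(D')|$, the bound is attained exactly when the query values are maximally separated, so it is tight at the stated sensitivity.

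I do not anticipate a genuine obstacle here, since the whole content is carried by the closed-form divergence of the Proposition; the only points requiring care are (i) recording that the mechanism's outputs are translated Gaussians rather than centered ones, so that the translation invariance of the integrand must be noted explicitly, and (ii) observing that the divergence depends on the means purely through their squared difference, which is precisely what lets the single sensitivity bound $|f(D)-f(D')|\le 1$ suffice to dominate all adjacent pairs at once.
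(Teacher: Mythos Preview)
Your proposal is correct and matches the paper's approach exactly: the paper states the corollary as ``immediate'' from the preceding Proposition, and the details you supply---translation invariance of the \renyi\ divergence reducing to the Proposition's setting, followed by the sensitivity bound $|f(D)-f(D')|\le 1$ to cap $\mu^2$---are precisely the unstated steps the paper intends.
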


Observe that the RDP budget curve for the Gaussian mechanism is particularly simple---a straight line (Figure~\ref{fig:RDP}). Recall that the (adaptive) composition of RDP mechanisms satisfies \renyi\ differential privacy with the budget curve that is the sum of the mechanisms' budget curves. It means that a composition of Gaussian mechanisms will behave, privacy-wise, ``like'' a Gaussian mechanism. Concretely, a composition of $n$ Gaussian mechanisms each with parameter $\sigma$ will have the RDP curve of a Gaussian mechanism with parameter $\sigma/\sqrt{n}$.

\begin{figure*}
	\centering
	\input{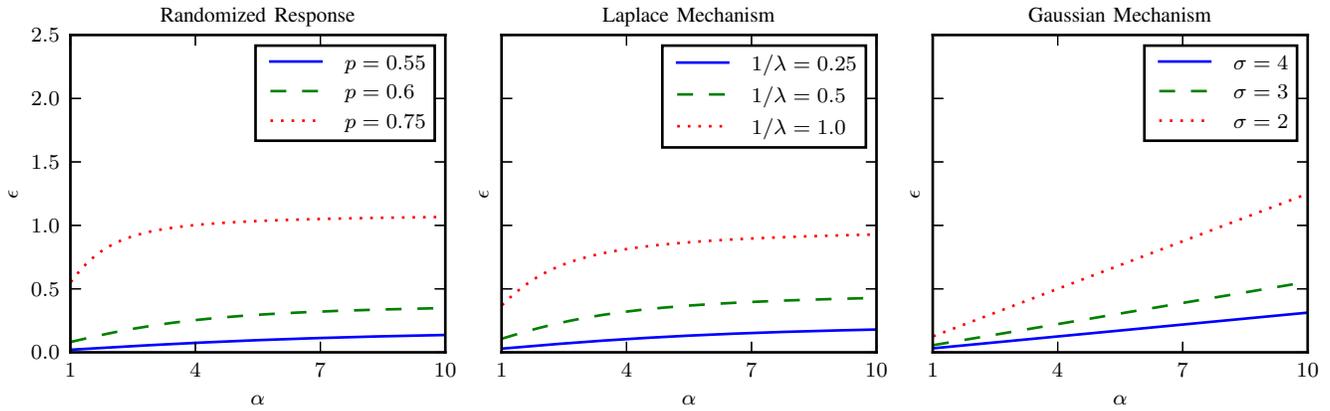}
	\caption{\alphaeps-\renyi\ differential privacy budget curve for three basic mechanisms with varying parameters. \label{fig:RDP}}
\end{figure*}

\begin{figure*}
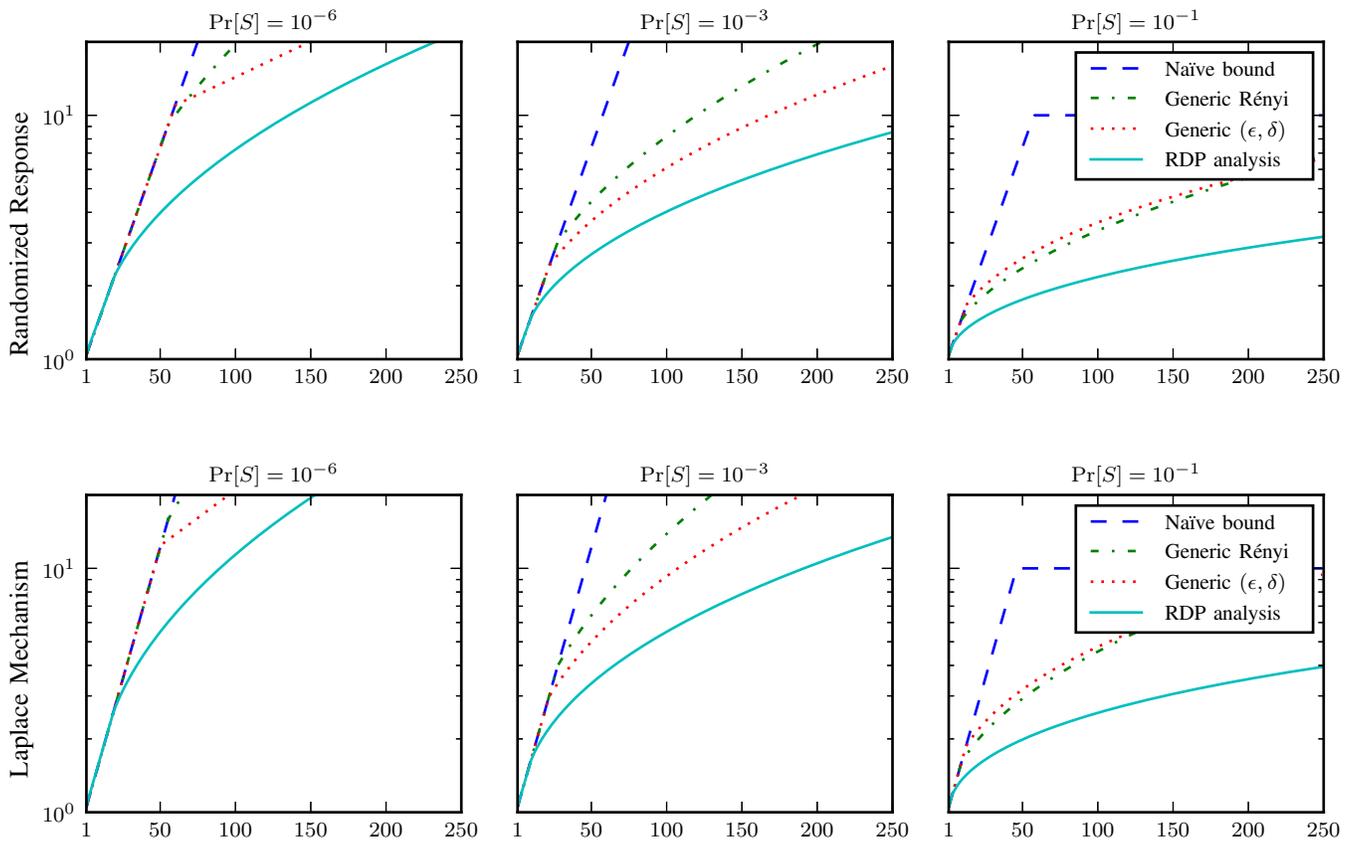

	\centering
	\input{Python/randresp.pgf}
	\input{Python/laplace.pgf}
	\caption{Various privacy guarantees of the randomized response with parameter $p=51\%$ (top row) and the Laplace mechanism with parameter $\lambda=20$ (bottom row) under self-composition. The $x$-axis is the number of compositions (1--250). The $y$-axis, in log scale, is the upper bound on the multiplicative increase in probability of event $S$, where $S$'s initial mass is either $10^{-6}$ (left), $10^{-3}$ (center), or $.1$ (right). The four plot lines are the ``na\"ive'' $n\eps$ bound (blue); optimal choice $\epsdelta$ in the standard advanced composition theorem (red); generic bound of Proposition~\ref{prop:advanced_comp} (blue); optimal choice of $\alpha$ in Proposition~\ref{prop:preservation} (cyan). \label{fig:rr_laplace}}
\end{figure*}

\subsection{Privacy of basic mechanisms under composition}

The ``bad outcomes'' interpretation of \renyi\ differential privacy ties the probabilities of seeing the same outcome under runs of the mechanism applied to adjacent inputs. The dependency of the upper bound on the increase in probability on its initial value is complex, especially compared to the standard differential privacy guarantee. The main advantage of this more involved analysis is that for most parameters the bound becomes tighter.

In this section we compare numerical bounds for several analyses of self-composed mechanisms (see Figure~\ref{fig:rr_laplace}), presented as three sets of graphs, where $\Pr[f(D)\in S]$ takes values $10^{-6}$, $10^{-3}$, and $10^{-1}$.

Each of the six graphs in Figure~\ref{fig:rr_laplace} (three probability values $\times$ \{randomized response, Laplace\}) plots bounds in logarithmic scale on the relative increase in probability of $S$ (i.e., $\Pr[f(D')\in S]/\Pr[f(D)\in S]$) offered by four analyses. The first, ``na\"ive'', bound follows from the basic composition theorem for differential privacy and, as expected, is very pessimistic for all but a handful of parameters. A tighter, advanced composition theorem~\cite{DRV10-boosting}, gives a choice of $\delta$, from which one computes $\eps'$ so that the $n$-fold composition satisfies $(\eps',\delta)$-differential privacy. The second curve plots the bound for the optimal (tightest) choice of $(\eps',\delta)$. Two other bounds come from \renyi\ differential privacy analysis: our generic advanced composition theorem (Proposition~\ref{prop:advanced_comp}) and the bound of Proposition~\ref{prop:preservation} for the optimal combination of $(\alpha,\eps)$ from the RDP curve of the composite mechanism.

Several observations are in order. The RDP-specific analysis for both mechanisms is tighter than all generic bounds whose only input is the mechanism's differential privacy parameter. On the other hand, our version of the advanced composition bound (Proposition~\ref{prop:advanced_comp}) is consistently outperformed by the standard $(\eps,\delta)$-form of the composition theorem, where $\delta$ is chosen \emph{optimally}. We elaborate on this distinction in the next section.

\begin{figure*}
	\centering
	\input{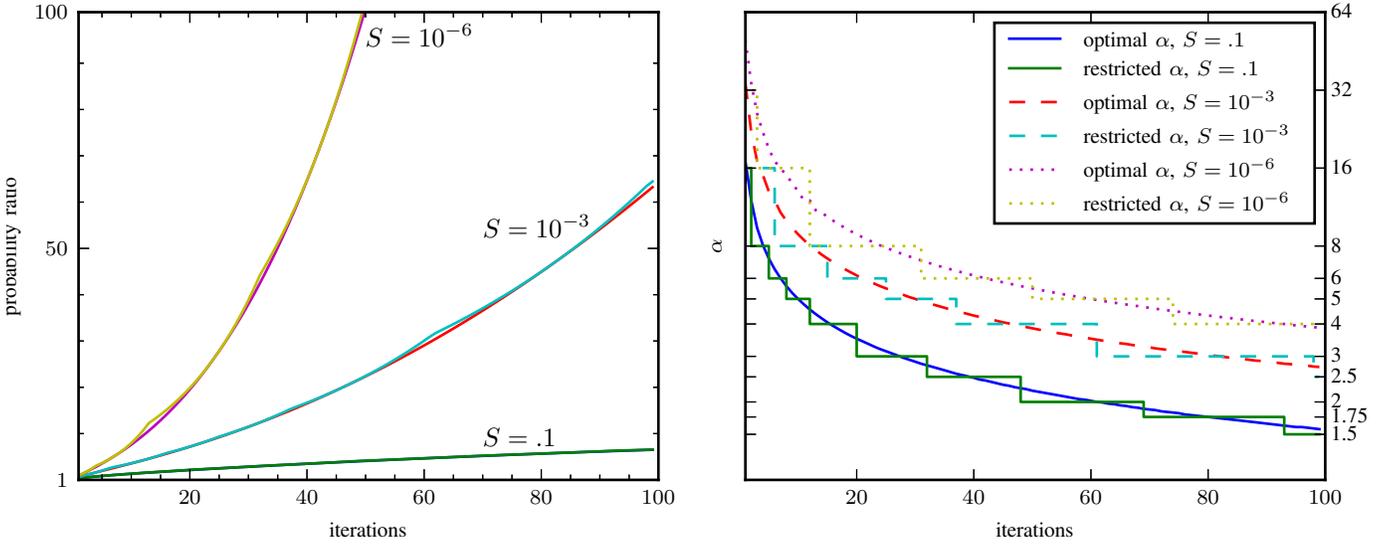}
	\caption{Left: Bounds on the ratio $\Pr[f(D')\in S]/\Pr[f(D)\in S]$ for $\Pr[f(D)\in S]\in\{.1, 10^{-3}, 10^{-6}\}$ for up to 100 iterations of a mixed mechanism (randomized response with $p=.52$, Laplace with $\lambda=20$ and Gaussian with $\sigma=10$). Each bound is computed twice: once for an optimal choice of $\alpha$ and once for $\alpha$ restricted to $\{1.5, 1.75, 2, 2.5, 3, 4, 5, 6, 8, 16, 32, 64,+\infty\}$. The curves for two choices of $\alpha$ are nearly identical. Right: corresponding values of $\alpha$ in log scale.\label{fig:alphas}}
\end{figure*}

\section{Discussion}\label{s:discussion}

\renyi\ differential privacy is a natural relaxation of the standard notion of differential privacy that preserves many of its essential properties. It can most directly be compared with \epsdelta-differential privacy, with which it shares several important characteristics. 

\leader{Probabilistic privacy guarantee.} The standard ``bad outcomes'' guarantee of \eps-differential privacy is independent of the probability of a bad outcome: it may increase only by a factor of $\exp(\eps)$. Its relaxation, \epsdelta-differential privacy, allows for an additional $\delta$ term, which allows for a complete privacy compromise with probability $\delta$.

In stark contrast, \renyi\ differential privacy even with very weak parameters never allows a total breach of privacy with no residual uncertainty. The following analysis quantifies this assurance.

Let $f$ be \alphaeps-RDP with $\alpha>1$. Recall that for any two adjacent inputs $D$ and $D'$, and an arbitrary prior $p$ the odds function $R(D,D')\sim p(D)/p(D')$ satisfies $\E\left[\left\{\frac{R_\textrm{post}(D,D')}{R_\textrm{prior}(D,D')}\right\}^{\alpha-1}\right]\leq\exp((\alpha-1)\eps)$. By Markov's inequality $\Pr[R_\textrm{post}(D,D')>\beta R_\textrm{prior}(D,D')]<e^\eps/\beta^{1/(\alpha-1)}$. For instance, if $\alpha=2$, the probability that the ratio between two posteriors increases by more than the $\beta$ factor drops off as $O(1/\beta)$.

\leader{Baseline-dependent guarantees.} The \renyi\ differential privacy bound gets weaker for less likely outcomes. For instance, if $f$ is a $(10.0, .1)$-RDP mechanism, an event of probability .5 under $f(D)$ can be as large as .586 and as small as .419 under $f(D')$. For smaller events the range is (in relative terms) wider. If the probability under $f(D)$ is .001, then $\Pr[f(D')\in S]\in [.00042, 0.00218]$. For $\Pr[f(D)\in S] = 10^{-6}$ the range is wider still: $\Pr[f(D')\in S]\in [.195\cdot10^{-6},4.36\cdot10^{-6}]$.

Contrasted with the pure \eps-differential privacy this type of guarantee is conceptually weaker and more onerous in application: in order to decide whether the increased risk is tolerable, one is required to estimate the baseline risk first.

However, in comparison with \epsdelta-DP the analysis via \renyi\ differential privacy is simpler and, especially for probabilities that are smaller than $\delta$, leads to stronger bounds. The reason is that \epsdelta-differential privacy often arises as a result of some analysis that implicitly comes with an \eps-$\delta$ tradeoff. Finding an optimal value of \epsdelta\ given the baseline risk may be non-trivial, especially in closed form. Contrast the following two, basically equivalent, statements of advanced composition theorems (Proposition~\ref{prop:advanced_comp} and its Corollary~\ref{cor:advanced_comp_epsdelta}):
\begin{quote}
Let $f\colon\D\mapsto\R$ be an adaptive composition of $n$ mechanisms all satisfying \eps-differential privacy for $\eps\leq 1$. Let $D$ and $D'$ be two adjacent inputs. Then for any $S\subset\R$, by Proposition~\ref{prop:advanced_comp}:
\begin{multline*}
\Pr[f(D')\in S] \leq
 \exp\left(2\eps\sqrt{n\log1/\Pr[f(D)\in S]}\right)\\
  \cdot \Pr[f(D)\in S].
\end{multline*}
or, by Corollary~\ref{cor:advanced_comp_epsdelta},
\begin{multline*}
\Pr[f(D')\in S] \leq
\exp\left(4\eps\sqrt{2 n \log 1 / \delta}\right)\\
\cdot \Pr[f(D)\in S]+\delta,
\end{multline*}
where $0<\eps,\delta<1$ such that $\log(1 / \delta) \geq \eps^2 n$.
\end{quote}

Given some value of baseline risk $\Pr[f(D)\in S]$, which formulation is easier to interpret? We argue that it is the former, since the \epsdelta\ form has a free parameter ($\delta$) that ought to be optimized in order to extract a tight bound that Proposition~\ref{prop:advanced_comp} gives directly.

The use of  \epsdelta\ bounds gets even more complex if we consider a composition of heterogeneous mechanisms. It brings us to the last point of comparison between  \epsdelta- and \renyi\ differential privacy measures.

\leader{Keeping track of accumulated privacy loss.} A finite privacy budget associated with an individual is an intuitive and appealing concept, to which \eps-differential privacy gives a rigorous mathematical expression. Cumulative loss of differential privacy over the cause of a mechanism run, a protocol, or one's lifetime can be tracked easily thanks to the additivity property of differential privacy. Unfortunately, doing so na\"ively likely exaggerates privacy loss, which grows sublinearly in the number of queries with all but negligible probability (via advanced composition theorems).

Critically, applying advanced composition theorems breaks the convenient abstraction of privacy as a non-negative real number. Instead, the guarantee comes in the \epsdelta\ form that effectively corresponds to a single point on an implicitly defined curve. Composition of multiple, heterogeneous mechanisms makes applying the composition rule optimally much more challenging, as one may choose various \epsdelta\ points to represent their privacy (in the analysis, not during the mechanisms' run time!). It begs the question of how to represent the privacy guarantee of a complex mechanism: distilling it to a single number throws away valuable information, while publishing the entire \epsdelta\ curve shifts the problem to the aggregation step. (See~Kairouz et al.~\cite{KOV15-composition} for an optimal bound on composition of homogeneous mechanisms and Murtagh and Vadhan~\cite{MV16-optimal} for hardness results and an approximation scheme for composition of mechanisms with heterogeneous privacy guarantees.)

\renyi\ differential privacy restores the concept of a privacy budget, thanks to its composition rule: RDP curves for composed mechanisms simply add up. Importantly, the $\alpha$'s of $(\alpha,\eps)$-\renyi\ differential privacy do not change. If RDP statements are reported for a common set of $\alpha$'s (which includes $+\infty$, to keep track of pure differential privacy), RDP of the aggregate is the sum of the reported vectors. Since the composition theorem of Proposition~\ref{prop:advanced_comp} takes as an input the mechanism's RDP curve, it means that the sublinear loss of privacy as a function of the number of queries will still hold.

For an example of this approach we tabulate the bound on privacy loss for an iterative mechanism consisting of three basic mechanisms: randomized response, Gaussian, and Laplace. Its RDP curve is given, in the closed form, by application of the basic composition rule to RDP curves of the underlying mechanisms (Table~\ref{tab:mechanisms}). The privacy guarantee is presented in Figure~\ref{fig:alphas} for three values of the baseline risk: .1, .001, and $10^{-6}$. For each set of parameters two curves are plotted: one for an optimal value of $\alpha$ from $(1,+\infty]$, the other for an optimal $\alpha$ restricted to the set of 13 values $\{1.5, 1.75, 2, 2.5, 3, 4, 5, 6, 8, 16, 32, 64,+\infty\}$. The two curves are nearly identical, which illustrates our thesis that reporting RDP curves for a restricted set of $\alpha$'s preserves tightness of privacy analysis.

\section{Conclusions and Open Questions}\label{s:conclusions}

We put forth the proposition that \renyi\ divergence yields useful insight into analysis of differentially private mechanisms. Among our findings
\begin{itemize}
\item \renyi\ differential privacy (RDP) is a natural generalization of pure differential privacy. 

\item RDP shares, with some adaptations, many properties that make differential privacy a useful and versatile tool. 

\item RDP analysis of Gaussian noise is particularly simple.

\item A composition theorem can be proved based solely on the properties of RDP, which implies that RDP packs sufficient information about a composite mechanism as to enable its analysis without consideration of its components.

\item Furthermore, an RDP curve may be sampled in just a few points to provide useful guarantees for a wide range of parameters. If these points are chosen consistently across multiple mechanisms, this information can be used to estimate aggregate privacy loss.
\end{itemize}

Naturally, multiple questions remain open. Among those
\begin{itemize}
%\item The \epsdelta\ form of the composition theorem proved via \renyi\ differential privacy (Corollary~\ref{cor:advanced_comp_epsdelta}) is less tight than its version from Dwork et al.~\cite{DRV10-boosting}. However, for all basic mechanisms the \emph{sui generis} RDP-based analysis of their self-composition outperforms the generic \epsdelta\ bound. This suggests potential for refined analysis.

%\item Our focus has been on analysis of known mechanisms via \renyi\ divergence. Turning this around, and characterizing optimal mechanisms with certain RDP properties is an interesting problem. We conjecture, for instance, that single-dimensional Gaussian mechanisms (or its discretized version) minimizes $D_1(\cdot\|\cdot)$ for given variance.

\item As Lemma~\ref{lemma:D_alpha_bound} demonstrates, the RDP curve of a differentially private mechanism is severely constrained. Making fuller use of these constraints is a promising direction, in particular towards formal bounds on tightness of RDP guarantees from select $\alpha$ values.

\item Proposition~\ref{prop:preservation} (probability preservation) is not tight when $D_\alpha(P\|Q)\to 0$. We expect that $P(A)\to Q(A)$ but the bound does not improve beyond $P(A)^{(\alpha-1)/\alpha}$.
\end{itemize}

\section*{Acknowledgments}

We would like to thank Cynthia Dwork, Kunal Talwar, Salil Vadhan, and Li Zhang for numerous fruitful discussions, the CSF reviewers, Nicolas Papernot and Damien Desfontaines for their helpful comments, and Mark Bun and Thomas Steinke for sharing a draft of~\cite{BS16-zCDP}.

\bibliographystyle{IEEEtran}
\bibliography{renyi}

% Generated by IEEEtran.bst, version: 1.13 (2008/09/30)
\begin{thebibliography}{10}
\providecommand{\url}[1]{#1}
\csname url@samestyle\endcsname
\providecommand{\newblock}{\relax}
\providecommand{\bibinfo}[2]{#2}
\providecommand{\BIBentrySTDinterwordspacing}{\spaceskip=0pt\relax}
\providecommand{\BIBentryALTinterwordstretchfactor}{4}
\providecommand{\BIBentryALTinterwordspacing}{\spaceskip=\fontdimen2\font plus
\BIBentryALTinterwordstretchfactor\fontdimen3\font minus
  \fontdimen4\font\relax}
\providecommand{\BIBforeignlanguage}[2]{{%
\expandafter\ifx\csname l@#1\endcsname\relax
\typeout{** WARNING: IEEEtran.bst: No hyphenation pattern has been}%
\typeout{** loaded for the language `#1'. Using the pattern for}%
\typeout{** the default language instead.}%
\else
\language=\csname l@#1\endcsname
\fi
#2}}
\providecommand{\BIBdecl}{\relax}
\BIBdecl

\bibitem{DMNS06}
C.~Dwork, F.~McSherry, K.~Nissim, and A.~D. Smith, ``Calibrating noise to
  sensitivity in private data analysis,'' in \emph{Third Theory of Cryptography
  Conference, TCC 2006}, S.~Halevi and T.~Rabin, Eds.\hskip 1em plus 0.5em
  minus 0.4em\relax Springer, 2006, pp. 265--284.

\bibitem{ODO}
C.~Dwork, K.~Kenthapadi, F.~McSherry, I.~Mironov, and M.~Naor, ``Our data,
  ourselves: Privacy via distributed noise generation,'' in \emph{Advances in
  Cryptography---Eurocrypt '06}.\hskip 1em plus 0.5em minus 0.4em\relax
  Springer, 2006, pp. 486--503.

\bibitem{MPRV09-CDP}
I.~Mironov, O.~Pandey, O.~Reingold, and S.~P. Vadhan, ``Computational
  differential privacy,'' in \emph{Advances in Cryptology---{CRYPTO} 2009},
  S.~Halevi, Ed., 2009, pp. 126--142.

\bibitem{De12}
A.~De, ``Lower bounds in differential privacy,'' in \emph{Theory of
  Cryptography---9th Theory of Cryptography Conference, {TCC} 2012}, R.~Cramer,
  Ed., 2012, pp. 321--338.

\bibitem{McSherry17-lottery}
F.~D. McSherry, ``How many secrets do you have?''
  \url{https://github.com/frankmcsherry/blog/blob/master/posts/2017-02-08.md},
  Feb. 2017.

\bibitem{DRV10-boosting}
C.~Dwork, G.~N. Rothblum, and S.~Vadhan, ``Boosting and differential privacy,''
  in \emph{51st Annual {IEEE} Symposium on Foundations of Computer Science
  ({FOCS})}, L.~Trevisan, Ed.\hskip 1em plus 0.5em minus 0.4em\relax IEEE, Oct.
  2010, pp. 51--60.

\bibitem{KOV15-composition}
P.~Kairouz, S.~Oh, and P.~Viswanath, ``The composition theorem for differential
  privacy,'' in \emph{Proceedings of the 32nd International Conference on
  Machine Learning ({ICML})}, 2015, pp. 1376--1385.

\bibitem{MV16-optimal}
J.~Murtagh and S.~Vadhan, ``The complexity of computing the optimal composition
  of differential privacy,'' in \emph{Theory of Cryptography---13th
  International Conference, {TCC} 2016-A, Part {I}}, E.~Kushilevitz and
  T.~Malkin, Eds., 2016, pp. 157--175.

\bibitem{DR16-CDP}
C.~Dwork and G.~N. Rothblum, ``Concentrated differential privacy,''
  \emph{CoRR}, vol. abs/1603.01887, 2016.

\bibitem{BS16-zCDP}
M.~Bun and T.~Steinke, ``Concentrated differential privacy: Simplifications,
  extensions, and lower bounds,'' in \emph{Theory of Cryptography---14th
  International Conference, {TCC} 2016-B, Part {I}}, M.~Hirt and A.~D. Smith,
  Eds., 2016, pp. 635--658.

\bibitem{Abadi16-DP-DL}
M.~Abadi, A.~Chu, I.~Goodfellow, H.~B. McMahan, I.~Mironov, K.~Talwar, and
  L.~Zhang, ``Deep learning with differential privacy,'' in \emph{Proceedings
  of the 2016 ACM SIGSAC Conference on Computer and Communications Security
  (CCS)}.\hskip 1em plus 0.5em minus 0.4em\relax ACM, 2016, pp. 308--318.

\bibitem{MMPRTV10}
A.~McGregor, I.~Mironov, T.~Pitassi, O.~Reingold, K.~Talwar, and S.~Vadhan,
  ``The limits of two-party differential privacy,'' in \emph{51st Annual {IEEE}
  Symposium on Foundations of Computer Science ({FOCS})}, L.~Trevisan,
  Ed.\hskip 1em plus 0.5em minus 0.4em\relax IEEE, 2010, pp. 81--90.

\bibitem{GKY11-limits}
A.~Groce, J.~Katz, and A.~Yerukhimovich, ``Limits of computational differential
  privacy in the client/server setting,'' in \emph{Theory of Cryptography---8th
  Theory of Cryptography Conference, {TCC} 2011}, Y.~Ishai, Ed., 2011, pp.
  417--431.

\bibitem{BCV16-separating}
M.~Bun, Y.~Chen, and S.~P. Vadhan, ``Separating computational and statistical
  differential privacy in the client-server model,'' in \emph{Theory of
  Cryptography---14th International Conference, {TCC} 2016-B, Part {I}},
  M.~Hirt and A.~D. Smith, Eds., 2016, pp. 607--634.

\bibitem{KM14-pufferfish}
D.~Kifer and A.~Machanavajjhala, ``Pufferfish: A framework for mathematical
  privacy definitions,'' \emph{ACM Transactions on Database Systems (TODS)},
  vol.~39, no.~1, pp. 3:1--3:36, Jan. 2014.

\bibitem{BGKS13-coupled}
R.~Bassily, A.~Groce, J.~Katz, and A.~D. Smith, ``Coupled-worlds privacy:
  Exploiting adversarial uncertainty in statistical data privacy,'' in
  \emph{54th Annual {IEEE} Symposium on Foundations of Computer Science}, 2013,
  pp. 439--448.

\bibitem{DJW13-local}
J.~C. Duchi, M.~I. Jordan, and M.~J. Wainwright, ``Local privacy and
  statistical minimax rates,'' in \emph{54th Annual IEEE Symposium on
  Foundations of Computer Science ({FOCS})}.\hskip 1em plus 0.5em minus
  0.4em\relax IEEE, Oct. 2013, pp. 429--438.

\bibitem{Renyi61}
A.~R\'enyi, ``On measures of entropy and information,'' in \emph{Proceedings of
  the fourth Berkeley symposium on mathematical statistics and probability},
  vol.~1, 1961, pp. 547--561.

\bibitem{EH07-Renyi}
T.~van Erven and P.~Harremo{\"{e}}s, ``\renyi divergence and
  {K}ullback-{L}eibler divergence,'' \emph{IEEE Transactions on Information
  Theory}, vol.~60, no.~7, pp. 3797--3820, Jul. 2014,
  \url{arxiv.org/abs/1206.2459}.

\bibitem{McSherry09-PINQ}
F.~D. McSherry, ``Privacy integrated queries: an extensible platform for
  privacy-preserving data analysis,'' in \emph{Proceedings of the 2009 {ACM}
  {SIGMOD} International Conference on Management of Data}, C.~Binnig and
  B.~Dageville, Eds., 2009, pp. 19--30.

\bibitem{Liese-Vajda}
F.~Liese and I.~Vajda, \emph{Convex Statistical Distances}.\hskip 1em plus
  0.5em minus 0.4em\relax Teubner, 1987.

\bibitem{Shayevitz11-Renyi}
O.~Shayevitz, ``On \renyi measures and hypothesis testing,'' in \emph{2011 IEEE
  International Symposium on Information Theory Proceedings}.\hskip 1em plus
  0.5em minus 0.4em\relax IEEE, Jul. 2011, pp. 894--898.

\bibitem{LSS14-GGHLite}
A.~Langlois, D.~Stehl{\'{e}}, and R.~Steinfeld, ``{GGHLite}: {M}ore efficient
  multilinear maps from ideal lattices,'' in \emph{Advances in
  Cryptology---EUROCRYPT 2014}, P.~Q. Nguyen and E.~Oswald, Eds.\hskip 1em plus
  0.5em minus 0.4em\relax Springer Berlin Heidelberg, 2014, pp. 239--256.

\bibitem{LPR13}
V.~Lyubashevsky, C.~Peikert, and O.~Regev, ``On ideal lattices and learning
  with errors over rings,'' \emph{J. ACM}, vol.~60, no.~6, pp. 43:1--43:35,
  Nov. 2013.

\bibitem{MMR09-Renyi}
Y.~Mansour, M.~Mohri, and A.~Rostamizadeh, ``Multiple source adaptation and the
  \renyi divergence,'' in \emph{UAI '09 Proceedings of the Twenty-Fifth
  Conference on Uncertainty in Artificial Intelligence}.\hskip 1em plus 0.5em
  minus 0.4em\relax AUAI Press, Jun. 2009, pp. 367--374.

\end{thebibliography}

\appendix

\section{Basic Properties of the \renyi\ Divergence}

For comprehensive exposition of properties of the \renyi\ divergence we refer to two recent papers~\cite{EH07-Renyi,Shayevitz11-Renyi}. Here we recall and re-prove several facts useful for our analysis.

\begin{proposition}[Non-negativity]\label{prop:non-negativity} For $1\leq \alpha$ and arbitrary distributions $P,Q$
	\[
	D_\alpha(P\|Q)\geq 0.
	\]
\end{proposition}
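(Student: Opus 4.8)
The plan is to reduce the claim to a single integral inequality and then dispatch it with Jensen's inequality. For $\alpha > 1$ the factor $\tfrac{1}{\alpha-1}$ is strictly positive, so $D_\alpha(P\|Q) \geq 0$ is equivalent to $\log \E_{x\sim Q}\bigl(P(x)/Q(x)\bigr)^\alpha \geq 0$, which in turn is equivalent to
\[
\E_{x\sim Q}\left(\frac{P(x)}{Q(x)}\right)^\alpha = \int_\R P(x)^\alpha Q(x)^{1-\alpha}\dx \geq 1.
\]
So the entire content of the proposition, for $\alpha>1$, is this lower bound of $1$ on the integral.

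To establish it, I would invoke convexity of the map $t\mapsto t^\alpha$ on $[0,\infty)$, which holds precisely because $\alpha>1$. Viewing the integrand as $Q(x)\cdot\bigl(P(x)/Q(x)\bigr)^\alpha$ and treating $P(x)/Q(x)$ as a random variable under $Q$, Jensen's inequality gives
\[
\E_{x\sim Q}\left(\frac{P(x)}{Q(x)}\right)^\alpha \geq \left(\E_{x\sim Q}\frac{P(x)}{Q(x)}\right)^\alpha = \left(\int_\R P(x)\dx\right)^\alpha = 1^\alpha = 1,
\]
where the middle equality uses $\E_{x\sim Q}[P(x)/Q(x)] = \int Q(x)\cdot P(x)/Q(x)\dx = \int P(x)\dx$, and the last step uses that $P$ is a probability density integrating to $1$. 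Taking logarithms and dividing by the positive quantity $\alpha-1$ completes the case $\alpha>1$.

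For the endpoint $\alpha=1$, where $D_1$ is defined by continuity as the Kullback--Leibler divergence $\E_{x\sim P}\log\bigl(P(x)/Q(x)\bigr)$, I would give a parallel argument using concavity of $\log$: since $-\log$ is convex, Jensen yields $D_1(P\|Q) = -\E_{x\sim P}\log\bigl(Q(x)/P(x)\bigr) \geq -\log \E_{x\sim P}\bigl(Q(x)/P(x)\bigr) = -\log\int Q(x)\dx = 0$. (Alternatively one can simply take the limit $\alpha\to 1^+$ of the bound just proved, since each $D_\alpha\geq 0$.)

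I do not expect a genuine obstacle here, as this is a textbook Jensen computation; the only points requiring a word of care are the degenerate cases. If $Q(x)=0$ on a set where $P(x)>0$, the integral diverges and $D_\alpha = +\infty \geq 0$ trivially, so the inequality still holds; and restricting attention to $\supp Q$ (as the paper's remark licenses) keeps the ratios well defined. The ``hard part,'' such as it is, is merely making sure the Jensen step is stated against the correct base measure ($Q$ for $\alpha>1$, $P$ for $\alpha=1$) so that the expectation of the likelihood ratio collapses to the total mass $1$.
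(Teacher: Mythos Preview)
Your proof is correct and essentially the same as the paper's: both apply Jensen's inequality to a convex power function so that the expectation of the likelihood ratio collapses to $1$. The only cosmetic difference is that you take the expectation over $Q$ with the convex map $t\mapsto t^\alpha$, whereas the paper takes it over $P$ with the convex map $t\mapsto t^{1-\alpha}$; these yield the identical integral inequality $\int P^\alpha Q^{1-\alpha}\dx\geq 1$, and your $\alpha=1$ argument coincides with the paper's.
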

\begin{proof} Assume that $\alpha>1$. Define $\phi(x)\triangleq x^{1-\alpha}$ and $g(x) \triangleq Q(x)/P(x)$. Then
	\begin{align*}
	D_\alpha(P\|Q)&=\frac1{\alpha-1}\log \E_P[\phi(g(x))]\\
	&\geq \frac1{\alpha-1}\log \phi(\E_P[g(x)])\\
	&=0
	\end{align*}
	by the Jensen inequality applied to the convex function $\phi$. The case of $\alpha=1$ follows by letting $\phi$ to be $\log 1/x$. 
\end{proof}

\begin{proposition}[Monotonicity]\label{prop:monotonicity} For $1\leq \alpha<\beta$ and arbitrary $P,Q$
	\[
	D_\alpha(P\|Q)\leq D_\beta(P\|Q).
	\]
\end{proposition}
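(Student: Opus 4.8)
The plan is to reduce the statement to the classical fact that $L^p$ norms with respect to a probability measure are non-decreasing in $p$ (Lyapunov's inequality), which is itself a one-line consequence of Jensen's inequality in the same spirit as the proof of Proposition~\ref{prop:non-negativity}.

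First I would rewrite the divergence in a form where only a single exponent varies with the order. For $\alpha>1$, observe that
\begin{align*}
\E_{x\sim Q}\left(\frac{P(x)}{Q(x)}\right)^\alpha
&= \int P(x)^\alpha Q(x)^{1-\alpha}\dx\\
&= \E_{x\sim P}\left(\frac{P(x)}{Q(x)}\right)^{\alpha-1},
\end{align*}
so that, writing $Z\triangleq P(x)/Q(x)$ for the (non-negative) random variable under the probability measure $P$,
\[
D_\alpha(P\|Q)=\frac{1}{\alpha-1}\log \E_P[Z^{\alpha-1}]=\log\left(\E_P[Z^{\alpha-1}]^{1/(\alpha-1)}\right)=\log \|Z\|_{L^{\alpha-1}(P)}.
\]
The crucial point is that $Z$ is the \emph{same} random variable for every order, and only the power $\alpha-1$ changes; the $Q$-expectation form obscures this because both the exponent and an extra density factor would move.

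Next I would invoke Lyapunov's inequality: for $0<p<q$ and a probability measure $P$, $\|Z\|_{L^p(P)}\leq\|Z\|_{L^q(P)}$. For completeness I would prove it exactly as the paper proves non-negativity---apply Jensen's inequality to the convex function $\phi(u)=u^{q/p}$ (convex since $q/p>1$) and the random variable $Z^p$, giving $\E_P[Z^p]^{q/p}\leq\E_P[Z^q]$, then raise both sides to the power $1/q$. Taking $p=\alpha-1$ and $q=\beta-1$ and applying the (monotone) logarithm yields $D_\alpha(P\|Q)\leq D_\beta(P\|Q)$ for $1<\alpha<\beta$. This argument also covers the case where a divergence is $+\infty$: if the smaller $L^p$ norm diverges so does the larger, so the inequality is preserved.

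Finally I would dispatch the endpoints. The case $\alpha=1$ follows by continuity, since $D_1(P\|Q)$ is defined as $\lim_{\alpha\to1}D_\alpha(P\|Q)$ and the inequality $D_\alpha\leq D_\beta$ holds for every $\alpha>1$; passing to the limit gives $D_1\leq D_\beta$. I do not expect a genuine obstacle here: essentially all the content lives in the Lyapunov step, and the only care required is bookkeeping around the $\alpha=1$ endpoint and the possibility of infinite values. The one modelling point worth stating explicitly is the absolute-continuity convention (already adopted in the Remark following the RDP definition) that lets the two expectation representations of the moment integral coincide.
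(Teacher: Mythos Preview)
Your proof is correct and is essentially the same as the paper's. Both proofs first rewrite $D_\alpha(P\|Q)=\frac{1}{\alpha-1}\log\E_P[(P/Q)^{\alpha-1}]$ and then apply Jensen's inequality to a power function; the paper uses the concave map $y\mapsto y^{(\alpha-1)/(\beta-1)}$ on $(P/Q)^{\beta-1}$, while you use the convex map $u\mapsto u^{(\beta-1)/(\alpha-1)}$ on $(P/Q)^{\alpha-1}$ and package it as Lyapunov's $L^p$-norm monotonicity---these are the same inequality up to raising both sides to a positive power, and the $\alpha=1$ endpoint is handled identically by continuity.
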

\begin{proof}[Proof (following~\cite{EH07-Renyi})] Assume that $\alpha>1$. Observe that the function $x\mapsto x^{\frac{\alpha-1}{\beta-1}}$ is concave. By Jensen's inequality
	\begin{align*}
	D_\alpha(P\|Q)&=\frac{1}{\alpha-1}\log \E_P \left(\frac{P(x)}{Q(x)}\right)^{\alpha-1}\\
	&=\frac1{\alpha-1}\log \E_P\left(\frac{P(x)}{Q(x)}\right)^{(\beta-1){\frac{\alpha-1}{\beta-1}}}\\
	&\leq \frac1{\alpha-1}\log\left\{ \E_P\left(\frac{P(x)}{Q(x)}\right)^{\beta-1}\right\}^{\frac{\alpha-1}{\beta-1}}\\
	&=D_\beta(P\|Q).\\
	\end{align*}
	The case of $\alpha=1$ follows by continuity.
\end{proof}

The following proposition appears in Langlois et al.~\cite{LSS14-GGHLite}, generalizing Lyubashevsky et al.~\cite{LPR13}.

\begin{proposition}[Probability preservation~\cite{LSS14-GGHLite}]\label{prop:preservation} Let $\alpha>1$, $P$ and $Q$ be two distributions defined over $\R$ with identical support, $A\subset \R$ be an arbitrary event. Then 
	\[
	P(A)\leq \left(\exp[D_\alpha(P\|Q)]\cdot Q(A)\right)^{(\alpha-1)/\alpha}.
	\]
\end{proposition}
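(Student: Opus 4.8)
The plan is to derive the inequality from Hölder's inequality with a carefully chosen pair of conjugate exponents. The key observation is that the \renyi\ divergence packages the quantity $\int_{\R} P(x)^\alpha Q(x)^{1-\alpha}\dx = \exp[(\alpha-1)D_\alpha(P\|Q)]$, so I want to express $P(A)=\int_A P(x)\dx$ in a form where this integral and $Q(A)=\int_A Q(x)\dx$ appear as the two factors produced by Hölder.

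Concretely, first I would split the density pointwise on $A$ as
\[
P(x) = \left(P(x)Q(x)^{(1-\alpha)/\alpha}\right)\cdot Q(x)^{(\alpha-1)/\alpha},
\]
which is legitimate because $P$ and $Q$ share support, so $Q(x)>0$ wherever the expression is evaluated. Then I would apply Hölder's inequality to the two bracketed factors with exponents $p=\alpha$ and $q=\alpha/(\alpha-1)$ (conjugate since $1/\alpha+(\alpha-1)/\alpha=1$). Raising the first factor to the power $p$ gives exactly $P(x)^\alpha Q(x)^{1-\alpha}$, and raising the second to the power $q$ collapses to $Q(x)$. This yields
\[
P(A)\leq\left(\int_A P(x)^\alpha Q(x)^{1-\alpha}\dx\right)^{1/\alpha}\left(\int_A Q(x)\dx\right)^{(\alpha-1)/\alpha}.
\]

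To finish I would bound the first integral by extending the domain from $A$ to all of $\R$; since the integrand is non-negative this only increases it, and the extended integral equals $\exp[(\alpha-1)D_\alpha(P\|Q)]$ by definition. The second integral is $Q(A)$. Simplifying the exponent via $\left(\exp[(\alpha-1)D_\alpha(P\|Q)]\right)^{1/\alpha}=\left(\exp[D_\alpha(P\|Q)]\right)^{(\alpha-1)/\alpha}$ and combining the two factors under the common power $(\alpha-1)/\alpha$ gives precisely the claimed bound.

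I do not anticipate a serious obstacle here: the whole content lies in selecting the exponents $(\alpha,\alpha/(\alpha-1))$ so that Hölder reproduces the divergence integral and $Q(A)$ with matching powers. The only points that deserve a word of care are the use of the shared-support hypothesis to justify the pointwise factorization (and to rule out indeterminate $Q(x)^{1-\alpha}$ terms where $Q$ vanishes), and the monotonicity step that lets me enlarge the region of integration from $A$ to $\R$. The restriction $\alpha>1$ is exactly what makes $p=\alpha$ and $q=\alpha/(\alpha-1)$ both genuine Hölder exponents in $(1,\infty)$.
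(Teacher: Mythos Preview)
Your proposal is correct and essentially identical to the paper's proof: both apply H\"older's inequality on $A$ with exponents $p=\alpha$ and $q=\alpha/(\alpha-1)$ to the factorization $P(x)=\bigl(P(x)Q(x)^{-1/q}\bigr)\cdot Q(x)^{1/q}$, then enlarge the first integral from $A$ to $\R$ to recover $\exp[(\alpha-1)D_\alpha(P\|Q)]$. Your commentary on the role of the shared-support hypothesis and the restriction $\alpha>1$ is also on point.
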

\begin{proof} The result follows by application of H\"older's inequality, which states that for real-valued functions $f$ and $g$, and real $p,q>1$, such that $1/p+1/q=1$,
	\[\|fg\|_1\leq \|f\|_p\|g\|_q.\]
	By setting $p\triangleq\alpha$, $q\triangleq\alpha/(\alpha-1)$, $f(x)\triangleq P(x)/Q(x)^{1/q}$, $g(x)\triangleq Q(x)^{1/q}$, and applying H\"older's, we have
	\begin{align*}\int_A\!\! P(x)\dx&\leq \left(\int_A\!\! P(x)^{\alpha}Q(x)^{1-\alpha}\dx\right)^\frac1\alpha\!\!\left(\int_A\!\! Q(x)\dx\right)^{\frac{\alpha-1}{\alpha}}\\
	&\leq \exp[D_\alpha(P\|Q)]^{(\alpha-1)/\alpha}Q(A)^{(\alpha-1)/\alpha},
	\end{align*}
	completing the proof.
\end{proof}

The most salient feature of the bound is its (often non-monotone) dependency on $\alpha$: as $\alpha$ approaches~1, $D_\alpha(P\|Q)$ shrinks (by monotonicity of the \renyi\ divergence) but the power to which it is raised goes to~0, pushing the result in the opposite direction. Several our proofs proceed by finding the optimal, or approximately optimal, $\alpha$ minimizing the bound.

The \renyi\ divergence is not a metric: it is not symmetric and it does not satisfy the triangle inequality. A weaker variant of the triangle inequality tying together the \renyi\ divergence of different orders does hold. Its general version is presented below.

\begin{proposition}[Weak triangle inequality]\label{prop:triangle}Let $P,Q,R$ be distributions on $\R$. Then for $\alpha > 1$ and for any $p, q>1$ satisfying $1/p+1/q=1$ it holds that
	\[
	D_\alpha(P\|Q)\leq \frac{\alpha-1/p}{\alpha-1} D_{p\alpha}(P\|R) + D_{q(\alpha-1/p)}(R\|Q).
	\]
\end{proposition}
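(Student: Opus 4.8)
The plan is to reduce the inequality to a single application of H\"older's inequality, exactly in the spirit of the proof of probability preservation (Proposition~\ref{prop:preservation}), but now inserting the intermediate distribution $R$ into the integral that defines $D_\alpha(P\|Q)$. Recall that $\exp[(\alpha-1)D_\alpha(P\|Q)]=\int_\R P(x)^\alpha Q(x)^{1-\alpha}\dx$, so it suffices to bound this integral from above by a product of two integrals, one comparing $P$ with $R$ and the other comparing $R$ with $Q$, each of which is an exponentiated \renyi\ divergence of the appropriate order.

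First I would factor the integrand as $P(x)^\alpha Q(x)^{1-\alpha}=f(x)g(x)$, where $f(x)=P(x)^\alpha R(x)^{1/p-\alpha}$ and $g(x)=R(x)^{\alpha-1/p}Q(x)^{1-\alpha}$; the two powers of $R$ are chosen to cancel, so that the product reproduces the original integrand. The exponent $1/p-\alpha$ is engineered so that after raising $f$ to the $p$-th power the exponents sum to $1$: indeed $f(x)^p=P(x)^{p\alpha}R(x)^{1-p\alpha}$, which is precisely the integrand of $D_{p\alpha}(P\|R)$. Symmetrically, $g(x)^q=R(x)^{q(\alpha-1/p)}Q(x)^{1-q(\alpha-1/p)}$, the integrand of $D_{q(\alpha-1/p)}(R\|Q)$, where I use the identity $(1-\alpha)q=1-q(\alpha-1/p)$, which follows from $1/p+1/q=1$.

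Then I would apply H\"older's inequality $\int fg\le(\int f^p)^{1/p}(\int g^q)^{1/q}$ and take logarithms, obtaining $(\alpha-1)D_\alpha(P\|Q)\le \tfrac1p\log\int f^p+\tfrac1q\log\int g^q$. Substituting the two identities above rewrites the right-hand side as $(\alpha-1/p)D_{p\alpha}(P\|R)+((\alpha-1/p)-1/q)D_{q(\alpha-1/p)}(R\|Q)$, since $\tfrac1p(p\alpha-1)=\alpha-1/p$ and $\tfrac1q(q(\alpha-1/p)-1)=(\alpha-1/p)-1/q$. The arithmetic identity $(\alpha-1/p)-1/q=\alpha-1$, again from $1/p+1/q=1$, collapses the second coefficient to $\alpha-1$, and dividing through by $\alpha-1>0$ yields the claimed bound.

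The exponent bookkeeping is routine; the one genuine point to verify is that the two divergences on the right are legitimately defined, i.e.\ that their orders exceed $1$. This I would check directly: $p\alpha>1$ since $p,\alpha>1$, and $q(\alpha-1/p)=q(\alpha-1)+1>1$ since $q>1$ and $\alpha>1$; the same computation shows $\alpha-1/p>0$, so the exponent of $R$ inside $f$ is negative but harmless under H\"older, all integrands being nonnegative. I expect the main obstacle to be purely notational---keeping the three orders $\alpha$, $p\alpha$, $q(\alpha-1/p)$ and the two conjugate weights $p,q$ straight---rather than any analytic difficulty, together with the standard caveat of assuming common support so that every integrand is well-defined.
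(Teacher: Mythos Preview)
Your proposal is correct and essentially identical to the paper's proof: the same factorization $P^\alpha Q^{1-\alpha}=\bigl(P^\alpha R^{1/p-\alpha}\bigr)\bigl(R^{\alpha-1/p}Q^{1-\alpha}\bigr)$, the same application of H\"older's inequality with exponents $p,q$, and the same arithmetic to recover the coefficients. Your explicit verification that the orders $p\alpha$ and $q(\alpha-1/p)$ exceed $1$ is a small bonus the paper leaves implicit.
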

\begin{proof}
	By H\"older's inequality we have:
	\begin{align*}
	\exp&[(\alpha-1)D_\alpha(P\|Q)]\\
	&=\int_\R P(x)^\alpha Q(x)^{1-\alpha}\dx\\
	&=\int_\R \frac{P(x)^\alpha}{R(x)^{\alpha-1/p}}\frac{R(x)^{\alpha-1/p}}{Q(x)^{\alpha-1}}\dx\\
	&\leq \left\{\int_\R \frac{P(x)^{p\alpha}}{R(x)^{p\alpha-1}}\dx\right\}^{1/p}
	\left\{\int_\R \frac{R(x)^{q\alpha-q/p}}{Q(x)^{q\alpha-q}}\dx\right\}^{1/q}\\
	&=\exp[(\alpha-1/p) D_{p\alpha}(P\|R)]\cdot\\
	&\quad \exp[(\alpha-1)D_{q\alpha-q/p}(R\|Q)].
	\end{align*}
	By taking the logarithm and dividing both sides by $\alpha-1$ we establish the claim.
\end{proof}

Several important special cases of the weak triangle inequality can be obtained by fixing parameters $p$ and $q$ (compare it with~\cite[Lemma 12]{MMR09-Renyi} and~\cite[Lemma 4.1]{LSS14-GGHLite}):
\begin{corollary}\label{col:triangle}For $P, Q, R$ with common support we have
	\begin{enumerate}
		\item $D_\alpha(P\|Q)\leq \frac{\alpha-1/2}{\alpha-1}D_{2\alpha}(P\|R)+D_{2\alpha-1}(R\|Q).$
		\item $D_\alpha(P\|Q)\leq \frac{\alpha}{\alpha-1}D_{\infty}(P\|R)+D_{\alpha}(R\|Q).$
		\item $D_\alpha(P\|Q)\leq D_{\alpha}(P\|R)+D_{\infty}(R\|Q).$
		\item $D_\alpha(P\|Q)\leq \frac{\alpha-\alpha/\beta}{\alpha-1}D_\beta(P\|R)+D_\beta(R\|Q)$,
		for some explicit $\beta=2\alpha-.5 + O(1/\alpha)$.
	\end{enumerate}
\end{corollary}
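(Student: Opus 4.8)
The plan is to recognize each of the four inequalities as a direct specialization of Proposition~\ref{prop:triangle} for a particular conjugate pair $(p,q)$ satisfying $1/p+1/q=1$. No new analytic input is required beyond that proposition, continuity of $D_\beta$ in its order $\beta$ on $(1,\infty)$, and the defining limit $D_\infty=\lim_{\beta\to\infty}D_\beta$. For part (1) I would set $p=q=2$: then the coefficient $\frac{\alpha-1/p}{\alpha-1}$ becomes $\frac{\alpha-1/2}{\alpha-1}$, the first order is $p\alpha=2\alpha$, and the second order is $q(\alpha-1/p)=2(\alpha-1/2)=2\alpha-1$, reproducing the bound exactly.

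Parts (2) and (3) are the two boundary cases obtained by pushing $p$ toward the ends of $(1,\infty)$. For (2) I would let $p\to\infty$, forcing $q\to 1$; then $\frac{\alpha-1/p}{\alpha-1}\to\frac{\alpha}{\alpha-1}$, the first order $p\alpha\to\infty$ converts $D_{p\alpha}(P\|R)$ into $D_\infty(P\|R)$, and the second order $q(\alpha-1/p)\to\alpha$ (from above, since $q(\alpha-1/p)\approx\alpha+(\alpha-1)/p$ for large $p$) yields $D_\alpha(R\|Q)$. For (3) I would let $p\to 1$, forcing $q\to\infty$; the coefficient tends to $1$, the first order $p\alpha\to\alpha$, and the second order $q(\alpha-1/p)\to\infty$ gives $D_\infty(R\|Q)$. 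Both passages to the limit are justified by continuity of the Rényi divergence in its order together with the endpoint definition of $D_\infty$.

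Part (4) is the only genuine computation. The idea is to choose $p$ so that the two orders in Proposition~\ref{prop:triangle} coincide, i.e. $p\alpha=q(\alpha-1/p)$, so that both divergences are evaluated at a common $\beta$. Substituting $q=p/(p-1)$ reduces this to the quadratic $\alpha p^2-2\alpha p+1=0$, whose root in $(1,\infty)$ is $p=1+\sqrt{1-1/\alpha}$, giving $\beta=p\alpha=\alpha+\sqrt{\alpha^2-\alpha}$. Since $\beta=p\alpha$ we have $\alpha/\beta=1/p$, so the coefficient $\frac{\alpha-1/p}{\alpha-1}$ is precisely $\frac{\alpha-\alpha/\beta}{\alpha-1}$ as claimed. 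Expanding $\sqrt{\alpha^2-\alpha}=\alpha\sqrt{1-1/\alpha}=\alpha-\tfrac12-\tfrac{1}{8\alpha}+\cdots$ then yields $\beta=2\alpha-\tfrac12+O(1/\alpha)$.

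I expect no real obstacle here: the entire content lies in Proposition~\ref{prop:triangle}, and the four parts are obtained by plugging in. The only points needing care are the clean justification of the degenerate exponent limits in (2) and (3) via continuity (rather than by re-running Hölder with $p=1$ or $q=1$, where the inequality is not directly available), and the elementary bookkeeping in (4)—solving the quadratic and confirming the asymptotic expansion of $\beta$.
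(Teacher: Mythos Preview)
Your proposal is correct and matches the paper's approach exactly: all four parts are obtained from Proposition~\ref{prop:triangle} with $p=q=2$, the limits $p\to\infty$ and $p\to1$, and the choice of $p$ that equalizes the two orders, respectively. Your treatment of part~(4) is in fact more explicit than the paper's, which merely asserts the condition and the asymptotic for $\beta$ without writing out the quadratic or the expansion.
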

\begin{proof}
	All claims follow from the weak triangle inequality (Proposition~\ref{prop:triangle}) where $p$ and $q$ are chosen, respectively, as
	\begin{enumerate}
		\item $p=q=2$.
		\item $p\to\infty$ and $q\triangleq p/(p-1)\to 1$.
		\item $q\to\infty$ and $p\triangleq q/(q-1)\to 1$.
		\item such that $\alpha p=\alpha q-1$ and $1/p+1/q=1$.
	\end{enumerate}
	In the last case $\beta\triangleq p\alpha=2\alpha-.5 + O(1/\alpha)$. 
\end{proof}

\end{document}